\newif\ifabstract
\newif\iffull
\makeatletter \hypersetup{pdftitle={\@title}}}
 \gdef\xxxmark{%
   \expandafter\ifx\csname @mpargs\endcsname\relax 
     \expandafter\ifx\csname @captype\endcsname\relax 
       \marginpar{xxx}
     \else
       xxx 
     \fi
   \else
     xxx 
   \fi}
 \gdef\xxx{\@ifnextchar[\xxx@lab\xxx@nolab}
 \long\gdef\xxx@lab[#1]#2{\textbf{[\xxxmark #2 ---{\sc #1}]}}
 \long\gdef\xxx@nolab#1{\textbf{[\xxxmark #1]}}
\gdef\fps@figure{!htbp}}
\let\realbfseries=\bfseries
\def\bfseries{\realbfseries\boldmath}
\newtheorem{theorem}{Theorem}[section]
\newtheorem{lemma}[theorem]{Lemma}
\newtheorem{proposition}[theorem]{Proposition}
\newtheorem{claim}[theorem]{Claim}
\theoremstyle{definition}
\newtheorem{definition}[theorem]{Definition}
\newtheorem{openprob}[theorem]{Open Problem}
\theoremstyle{remark}
\newtheorem{remark}[theorem]{Remark}
\let\epsilon=\varepsilon
\def\defn#1{\textbf{\textit{\boldmath #1}}}
\newcommand{\vocab}[1]{\textbf{\textit{#1}}}
\title{Complexity of 2D Snake Cube Puzzles}
\author{%
  MIT Hardness Group%
    \thanks{Artificial first author to highlight that the other authors (in
      alphabetical order) worked as an equal group. Please include all
      authors (including this one) in your bibliography, and refer to the
      authors as “MIT Hardness Group” (without “et al.”).}
\and
  Nithid Anchaleenukoon%
    \thanks{MIT Computer Science and Artificial Intelligence Laboratory,
      32 Vassar St., Cambridge, MA 02139, USA, \protect\url{{nithidan,alexdang,edemaine,kayleeji,psaeng}@mit.edu}}
\and
  Alex Dang\footnotemark[2]
\and
  Erik D. Demaine\footnotemark[2]
\and
  Kaylee Ji\footnotemark[2]
\and
  Pitchayut Saengrungkongka\footnotemark[2]
}
\date{}
\begin{document}
\begin{asydef}
	settings.outformat = "pdf";
	texpreamble("\usepackage{libertine} \usepackage[libertine]{newtxmath} \usepackage[scaled=0.96]{zi4} \usepackage[utf8]{inputenc}");
\end{asydef}
\maketitle

\begin{abstract}
  Given a chain of $HW$ cubes where each cube is marked ``turn $90^\circ$'' or ``go straight'', when can it fold into a $1 \times H \times W$ rectangular box?
  We prove several variants of this (still) open problem NP-hard:
  (1)~allowing some cubes to be wildcard (can turn or go straight);
  (2)~allowing a larger box with empty spaces
  (simplifying a proof from CCCG 2022);
  (3)~growing the box (and the number of cubes) to $2 \times H \times W$
  (improving a prior 3D result from height $8$ to~$2$);
  (4)~with hexagonal prisms rather than cubes, each specified as going straight,
  turning $60^\circ$, or turning $120^\circ$; and
  (5)~allowing the cubes to be encoded implicitly to compress exponentially large repetitions.
\end{abstract}

\section{Introduction}

\vocab{Snake Cube} \cite{abel2013finding} is a physical puzzle consisting of wooden unit cubes joined in a chain by an elastic string running through the interior of each cube. For every cube other than the first and last, the string constrains the two neighboring cubes to be at opposite or adjacent faces of this cube, in other words, whether the chain must continue straight or turn at a $90^\circ$ angle. In the various manufactured puzzles, the objective is to rearrange a chain of $27$ cubes into a $3 \times 3 \times 3$ box.

To generalize this puzzle, we ask: given a chain of $DHW$ cubes, where $D, H, W$ are positive integers, is it possible to rearrange the cubes to form a $D \times H \times W$ rectangular box? We call this problem $D \times H \times W$ \textsc{Snake Cube}.
Previous results on its complexity include:
\begin{itemize}[itemsep=0pt]
    \item Abel et al.~\cite{abel2013finding} proved $8 \times H \times W$ \textsc{Snake Cube} is NP-complete by reduction from \textsc{$3$-Partition}.
    \item Demaine et al.~\cite{original2Dsnake} proved \textsc{2D Snake Cube Packing}---deciding whether a chain of cubes can \vocab{pack} (but not necessarily fill) a $1 \times H \times W$ rectangular box where all cubes are constrained to align with the box---is NP-complete by reduction from \textsc{Linked Planar 3SAT}. This result also holds for a closed chain \cite{original2Dsnake}.

    This result improves a previous result by Demaine and Eisenstat \cite{fixedangle_chain}, which proves NP-hardness of the problem that given a sequence of angles in interval $[16.26^\circ, 180^\circ]$, is there a non-crossing polyline of side-length 1 and the angle sequence as given. \cite{original2Dsnake} does this with angles only in $\{90^\circ, 180^\circ\}$.
\end{itemize}
Both \cite{abel2013finding} and \cite{original2Dsnake} pose the (still) open problem of determining the complexity of $1 \times H \times W$ \textsc{Snake Cube},
illustrated in Figure~\ref{fig:example_snake}:
\begin{openprob}[2D Snake Cube]
\label{prob:2d_snake}
    Is $1\times H \times W$ \textsc{Snake Cube} NP-hard?
\end{openprob}

\subsection{Our Results}

In this paper, we prove NP-hardness of several variations of Open Problem \ref{prob:2d_snake}:
\begin{itemize}
\item In Section \ref{sec:2D_wildcard}, we prove NP-completeness of \textsc{2D Snake Cube with Wildcards}, where some instructions can be the wildcard \texttt{*}, allowing a free choice between straight or turn.

We also give an alternative proof that \textsc{2D Snake Cube Packing} is NP-complete, simplifying \cite{original2Dsnake}.
\item In Section \ref{sec:2xmxn_fill}, we prove that $2 \times H \times W$ \textsc{Snake Cube} is NP-complete. This improves the result of Abel et al.~\cite{abel2013finding} from $D=8$ to $D=2$.
\item In Section \ref{sec:triangular}, we prove NP-completeness of \textsc{Hexagonal 2D Snake Cube Packing}: deciding whether a chain of hexagonal prisms each specified as going straight, turning $60^\circ$, or turning $120^\circ$ can be packed into a $60^\circ$, $H \times W$ parallelogram. Similar to \cite{original2Dsnake}, we extend this result to closed chains. One can view this as an improvement to \cite{fixedangle_chain} in that angles can be restricted to be in $\{60^\circ, 120^\circ\}$.
\item In Section \ref{sec:weak2DFill}, we prove \emph{weak} NP-hardness of \textsc{2D Snake Cube}, when the input instructions can be encoded to efficiently represent repeated sequences.
\end{itemize}

The first three results are all based on a common framework for reduction from \textsc{Numerical 3D Matching}, detailed in Sections~\ref{sec:overview_3DM} and~\ref{sec:wirepacking}, while the last result is a reduction from \textsc{2-Partition}. 
We introduce the two base problems we reduce from, and formally define the problems we prove hard, in Section~\ref{sec:preliminaries}.

\section{Definitions}
\label{sec:preliminaries}

\subsection{Problem Statement}
First, we define the problems we analyze in more precise terms. 

The input to all snake-cube problems is a ``box'' and a ``program'';
refer to Figure~\ref{fig:example_snake}.
A \vocab{box} is a $D \times H \times W$ rectangular cuboid that the cubes of the snake-cube puzzle must fit into. This box can be visualized as a cubic grid where each cube occupies one space of the grid.
A \vocab{program} $\mathcal{P} = (p_1,p_2, \ldots, p_k)$ is a length-$k$ string of \vocab{instructions}, where each instruction $p_i$ is either the character $\texttt S$ (``straight'') or $\texttt T$ (``turn'').
The corresponding \vocab{chain} is the sequence $c_1, c_2, \dots, c_k$ of pairwise distinct and consecutively adjacent cubes that must \defn{follow the program} in the sense that each instruction $p_i$ (where $i \in \{2, \ldots, k-1\}$) constrains the angle between the three cubes $c_{i-1}, c_i, c_{i+1}$ to be $180^{\circ}$ for $p_i = \texttt S$ (i.e., going straight), and $90^{\circ}$ for $p_i = \texttt T$ (i.e., turning with \emph{absolute} turn angle $90^\circ$).
(Note that the first and last instructions have no effect; we include them so that cubes and instructions have a bijection.)
A length-$k$ \vocab{segment} refers to a subchain of $k$ consecutive cubes that are constrained to form a straight line (e.g., the subchain following the instructions $\texttt T \texttt S \texttt S \texttt S \texttt T$ is a length-$5$ segment).

\begin{figure}[t]
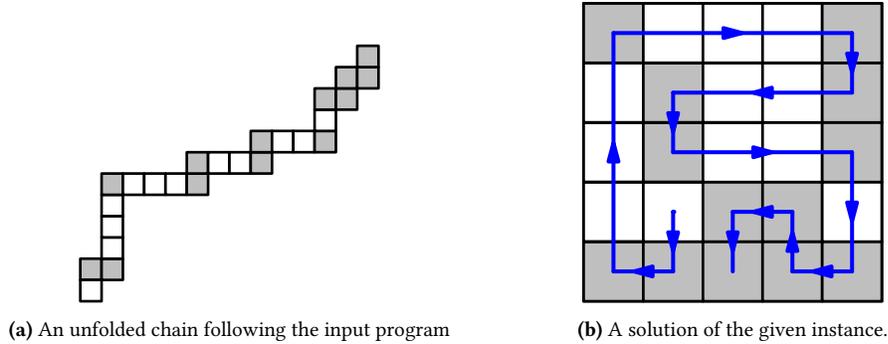

    \centering
    \begin{subfigure}{0.5\textwidth}
        \centering
        \asyinclude{asy/example_unfolded.asy}
        \caption{An unfolded chain following the input program}
        \label{fig:unfolded_example}
    \end{subfigure}
    \begin{subfigure}{0.3\textwidth}
        \centering
        \asyinclude{asy/example_folded.asy}
        \caption{A solution of the given instance.}
        \label{fig:folded_example}
    \end{subfigure}
    
    \caption{An example instance of \textsc{$D\times H\times W$ Snake Cube} with input $D = 1$, $H = 5$, $W = 5$, and the program $\texttt{STTSSSTSSSTTSSTTSSTSTTTTT}$. The gray and white cells represent cubes following instructions $\texttt T$ and $\texttt S$, respectively.}
    \label{fig:example_snake}
\end{figure}
\begin{definition} \textsc{$D\times H\times W$ Snake Cube} is the problem of deciding, given positive integers $D$, $H$, $W$ (defining the box) and program $\mathcal P$ of length $DHW$, whether there is a chain of cubes following the program $\mathcal P$. \textsc{2D Snake Cube} refers to the special case of $D=1$.
\end{definition}
\begin{definition} \textsc{2D Snake Cube Packing} is a variant of \textsc{2D Snake Cube} where $\mathcal P$ may have length less than $DHW$. Thus, the chain does not necessarily include all cubes in the box, but must stay within the box.
\end{definition}
\begin{definition} \textsc{2D Snake Cube with Wildcards} is a variant of \textsc{2D Snake Cube} in which $\mathcal P$ may contain a third character \texttt{*}, where $p_i = \texttt{*}$ constrains the angle between the three cubes $c_{i-1}, c_{i}, c_{i+1}$ to be either $90^\circ$ or $180^\circ$.
\end{definition}
We also consider \textsc{Hexagonal 2D Snake Cube Packing}, a version of 2D Snake Cube on hexagonal prisms with three different characters: one constraining the chain to go straight, one constraining the chain to turn by $\pm 60^{\circ}$, and one constraining the chain to turn by $\pm 120^{\circ}$. We more formally define this in Section \ref{sec:triangular}.

We use the notation $(s)^k$ to denote the instruction sequence $s$ repeated $k$ times. For example, $\texttt{T}(\texttt{SST})^2$ represents the sequence $\texttt{TSSTSST}$. For almost all our proofs, this notation is just convenient shorthand. The exception is our proof that \textsc{2D Snake Cube} is weakly NP-hard, which requires the input instruction to be compressed using this encoding, i.e., encode $(s)^k$ using $|s| + O(\log k)$ bits.

All of these problems, with the exception of \textsc{2D Snake Cube} under the special encoding, are in NP: to verify a solution, simply check all angle constraints (of which there are linearly many) and that the chain remains within the box. This algorithm is linear in both time and space. Thus, showing any of these problems is NP-hard suffices to establish NP-completeness.

\subsection{2-Partition}

Our weak NP-hardness of \textsc{2D Snake Cube} (in Section \ref{sec:weak2DFill}) will reduce from \textsc{2-Partition}, defined as follows:
\begin{definition}[2-Partition]
    Given a multiset $A = \{a_1, a_2, \ldots, a_n\}$ of positive integers, \textsc{2-Partition} is the problem of deciding whether there exists a partition of sets $A$ into disjoint union $A_1\sqcup A_2$ such that the sums of elements in $A_1$ and in $A_2$ are equal.
\end{definition}
We will need the following known result:
\begin{theorem}
    \textsc{2-Partition} is \defn{weakly NP-hard}, i.e., NP-hard when the numbers $a_i$ are encoded in binary (and thus could be exponential in value).
\end{theorem}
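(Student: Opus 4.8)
The plan is to give a polynomial-time reduction from \textsc{Subset Sum}, one of Karp's original NP-complete problems and a textbook example of a weakly NP-hard problem. Recall that a \textsc{Subset Sum} instance is a multiset $\{b_1,\dots,b_m\}$ of positive integers together with a target $t$ (all given in binary), asking whether some sub-multiset sums to exactly $t$. For a fully self-contained treatment one could instead start from \textsc{3SAT} and build a \textsc{Subset Sum} instance by the classical large-base digit gadget (one number per literal occurrence, two slack numbers per clause, with the base chosen so that additions never carry), but I would simply cite \textsc{Subset Sum} as known; the real content is the step from \textsc{Subset Sum} to \textsc{2-Partition}.

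Given such an instance, set $\sigma = \sum_{i=1}^m b_i$. If $t \ge \sigma$ or $t = 0$ the instance is trivial, so assume $0 < t < \sigma$. Output the \textsc{2-Partition} instance
\[
A = \{b_1,\dots,b_m\}\cup\{\,\sigma+t,\; 2\sigma-t\,\}.
\]
Every element is a positive integer of binary length polynomial in the input, so this map is computable in polynomial time, and the total sum of $A$ is $4\sigma$; hence any valid partition must split $A$ into two parts of sum $2\sigma$ each.

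For correctness, observe that $(\sigma+t)+(2\sigma-t)=3\sigma>2\sigma$, so the two added ``anchor'' elements cannot lie in the same part, and therefore lie in opposite parts. The part containing $2\sigma-t$ then needs exactly $t$ further weight drawn from $\{b_1,\dots,b_m\}$, and --- equivalently, by complementation --- the part containing $\sigma+t$ needs exactly $\sigma-t$. Thus a balanced partition of $A$ exists if and only if some sub-multiset of $\{b_1,\dots,b_m\}$ sums to $t$, i.e.\ iff the original \textsc{Subset Sum} instance is a yes-instance. Membership in NP is immediate (a sub-multiset $A_1$ is a polynomial-size certificate, checkable by a single addition), so this establishes weak NP-completeness. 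The one point that requires care is precisely the anchor argument: it is essential to add \emph{two} elements whose combined value exceeds half the total, since that is what rules out their being co-located and forces the remaining elements to realize the target $t$.
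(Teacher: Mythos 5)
Your proposal is correct, but it takes a different route from the paper: the paper does not prove this theorem at all, it simply cites the result as Problem SP12 in Garey and Johnson's compendium, which is the standard practice for such a classical fact. You instead give an explicit polynomial-time reduction from \textsc{Subset Sum}, and the reduction is the textbook one and works: with $\sigma=\sum_i b_i$ and $0<t<\sigma$, adding the two anchors $\sigma+t$ and $2\sigma-t$ makes the total $4\sigma$, forces the anchors into opposite parts (since $3\sigma>2\sigma$), and leaves exactly $t$ of $b$-weight to be placed with $2\sigma-t$, so balanced partitions correspond exactly to subsets summing to $t$; the numbers have polynomial bit length, so weak hardness transfers. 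What your approach buys is a self-contained argument modulo the hardness of \textsc{Subset Sum}; what it costs is that you still end up citing a weakly NP-hard base problem (or sketching the 3SAT-to-\textsc{Subset Sum} digit gadget), so the citation is merely pushed one level back --- which is presumably why the paper opts for the one-line reference. The one substantive step in your reduction, the anchor argument, is stated and justified correctly, including the edge cases $t=0$ and $t\ge\sigma$.
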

\begin{proof}
    This is listed as Problem SP12 in Garey and Johnson's compilation \cite[Section A3.2]{garey_johnson}.
\end{proof}

\subsection{Numerical 3D Matching}
Our other (strong) NP-hardness proofs are reductions from \textsc{Numerical 3D Matching}, defined as follows:
\begin{definition}
For any given target sum $t$ and sequences $(a_i)_{i=1}^n$, $(b_i)_{i=1}^n$, and $(c_i)_{i=1}^n$, each consisting of $n$ positive integers, \vocab{Numerical 3D Matching (Numerical 3DM)} is the problem of deciding
whether there exist permutations $\sigma$ and $\pi$ of the set $\{1,2,\dots,n\}$ that satisfies $a_i + b_{\sigma(i)} + c_{\pi(i)} = t$ for all $i$. We refer to such a solution $(\sigma,\pi)$ as a \vocab{matching}.
\end{definition}
\begin{theorem}
   \textsc{Numerical 3DM} is \defn{strongly NP-hard}, i.e., NP-hard when the numbers $a_i,b_i,c_i,t$ are encoded in unary (and thus are polynomial in value).
\end{theorem}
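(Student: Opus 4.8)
The plan is to reduce from \textsc{3-Partition}, which is classically strongly NP-hard (Garey and Johnson~\cite{garey_johnson}, Problem SP15); in fact \textsc{Numerical 3DM} itself is Problem SP16 there, obtained by exactly such a reduction, so in the paper we may simply invoke this. Recall a \textsc{3-Partition} instance is a multiset $\{x_1,\dots,x_{3m}\}$ with $\sum_i x_i = mB$ and $B/4 < x_i < B/2$ for every $i$, and we must decide whether it splits into $m$ triples each of sum $B$. For a self-contained argument I would, given such an instance, build a \textsc{Numerical 3DM} instance whose three sequences and target $t$ are designed so that a matching encodes precisely such a partition.

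The first thing to notice is why the obvious construction — take $n = m$, let $(a_i),(b_i),(c_i)$ be the ``small'', ``middle'', and ``large'' thirds of the sorted list, and set $t = B$ — fails: the bounds $B/4 < x_i < B/2$ do not determine which element of a triple is smallest, middle, or largest (the middle one can fall on either side of $B/3$), so a genuine $3$-partition need not respect any fixed partition of the elements into thirds. The remedy, which is the technical core, is to attach to each $x_i$ a distinct \emph{index tag} living in low-order digits of a large radix $R \gg m$, spread the $3m$ tagged values across the three sequences together with auxiliary ``complementary'' elements, and choose $t$ so that the high-order (``value'') part of an equation $a + b_{\sigma(i)} + c_{\pi(i)} = t$ forces the selected originals to sum to $B$, while the low-order (``tag'') part forces the selected indices to satisfy a consistency relation from which one reads off $m$ disjoint triples — and conversely, so that every $3$-partition lifts to a matching by filling the tags in the prescribed pattern. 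Finally one checks that all numbers produced are bounded by a fixed polynomial in $m$ and $B$, hence polynomial in the size of the unary-encoded input, so strong NP-hardness is preserved.

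The main obstacle is exactly this tag calibration: one must pick $R$ large enough that no carry ever propagates between the value part and the tag part (otherwise spurious sums equal to $t$ could appear), and one must arrange the tag arithmetic so that the \emph{only} way to hit $t$ in the tag digits is the intended index-consistent pattern — and then verify both implications ($3$-partition $\Rightarrow$ matching and matching $\Rightarrow$ $3$-partition) by a short case analysis on triples. Everything else is routine, which is why in the paper we simply cite~\cite{garey_johnson}.
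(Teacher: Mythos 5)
Your proposal is correct and ultimately takes the same route as the paper: the paper's entire proof is the citation to Problem SP16 in Garey and Johnson, which you also invoke. The additional reduction sketch from \textsc{3-Partition} is a reasonable (if not fully executed) outline, but it is not needed since the cited result suffices.
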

\begin{proof}
     This is listed as Problem SP16 in Garey and Johnson's compilation \cite[Section A3.2]{garey_johnson}.
\end{proof}

In fact, we will consider the following slightly modified version of 3D Matching.
\begin{proposition}
    \label{prop:3DM_tweak}
    \textsc{Numerical 3DM} is strongly NP-hard even when we assume that
    $a_i\in (0.5t, 0.6t)$, $b_i\in (0.25t, 0.3t)$, and
    $c_i\in (0.125t, 0.15t)$ for all $1\leq i\leq n$.
\end{proposition}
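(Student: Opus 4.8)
The plan is to reduce from standard \textsc{Numerical 3DM} (without the interval restrictions) by a value-rescaling and padding argument that forces each $a_i$, $b_i$, $c_i$ into the prescribed windows without changing the answer. Given an instance with target $t$ and sequences $(a_i),(b_i),(c_i)$, I would first shift and scale: pick a large integer $M$ (polynomial in the input size, since everything is in unary) and replace each triple-sum equation $a_i+b_{\sigma(i)}+c_{\pi(i)}=t$ by an equivalent one $a_i'+b_j'+c_k'=t'$ where $a_i' = a_i + \alpha$, $b_j' = b_j + \beta$, $c_k' = c_k + \gamma$, and $t' = t + \alpha+\beta+\gamma$. Any choice of constants $\alpha,\beta,\gamma$ preserves the set of valid matchings exactly, because the same constant is added to every $a$, every $b$, every $c$, and correspondingly to $t$. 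So the combinatorial problem is untouched; only the numeric ranges move.

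The key step is choosing $\alpha,\beta,\gamma$ (and possibly an overall scaling factor) so that the new values land in $(0.5t',0.6t')$, $(0.25t',0.3t')$, $(0.125t',0.15t')$ respectively. Let $R = \max_i a_i - \min_i a_i$ and similarly for $b,c$; also let $S = \max(R_a,R_b,R_c)$. I want $t'$ to be large compared to $S$ so that each window, whose width is a constant fraction of $t'$, comfortably contains a translate of the original range. Concretely, first scale all numbers (and $t$) by a common factor $\lambda$ so that $\lambda \cdot t$ dominates $S$ by a sufficient margin — say $\lambda t > 100 S$ — which is fine in unary since $\lambda$ is polynomial. Then set $\alpha$ so that the interval $[\lambda\min_i a_i + \alpha,\ \lambda\max_i a_i + \alpha]$ sits around $0.55\,t'$, and analogously $\beta$ around $0.275\,t'$ and $\gamma$ around $0.1375\,t'$; here $t' = \lambda t + \alpha+\beta+\gamma$, so this is a linear system in $\alpha,\beta,\gamma$ that is easily solved (the coefficients $0.55+0.275+0.1375 < 1$, so the system is nonsingular). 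Rounding to integers introduces only $O(1)$ slack, absorbed by the margin. One must double-check the strict inequalities at the window endpoints, but with the $100S$ margin the rounded translates stay strictly inside.

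The main obstacle — really the only thing needing care — is bookkeeping the interaction between the additive shifts and the fact that $t'$ itself depends on $\alpha,\beta,\gamma$: the windows are defined as fractions of the \emph{new} target, not the old one, so one cannot pick the shifts independently of $t'$. This is handled by treating it as the small linear system above and verifying it has a solution with $t'$ positive and all shifts nonnegative integers; the dominance $\lambda t \gg S$ guarantees the solution is consistent and the rounding is harmless. I would also note that the reduction is polynomial: $\lambda$, $\alpha$, $\beta$, $\gamma$ are all polynomially bounded, so the unary encoding of the new instance has polynomial size, preserving strong NP-hardness. Finally I would remark that the specific constants $0.5,0.6,0.25,0.3,0.125,0.15$ are not special — any fixed triple of disjoint intervals whose left endpoints sum to less than $1$ would work by the same argument — which is exactly the form in which Proposition~\ref{prop:3DM_tweak} will be used downstream.
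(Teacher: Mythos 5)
Your starting point is the same as the paper's: shift each class by its own additive constant (and the target by their sum), which preserves matchings exactly, and then argue the shifted values land in the prescribed windows. But the execution has a genuine gap. Because the shifts are added to the target as well, the ratios $a_i'/t'$, $b_j'/t'$, $c_k'/t'$ can only be driven toward fractions that sum to exactly $1$; the paper uses shifts $4X,2X,X$ with $t'=t+7X$, so the values tend to $\tfrac47,\tfrac27,\tfrac17$ of $t'$, which lie strictly inside the three windows and sum to $1$. You instead center the shifted ranges at $0.55\,t'$, $0.275\,t'$, $0.1375\,t'$, fractions summing to $0.9625<1$, and read the slack below $1$ as a virtue (``nonsingular system''). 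In fact it is the defect: summing your three centering equations together with $t'=\lambda t+\alpha+\beta+\gamma$ pins the new target at $0.0375\,t'=\lambda t-(\bar a+\bar b+\bar c)$, where $\bar a,\bar b,\bar c$ are the midpoints of the scaled ranges. That right-hand side can be zero or negative --- e.g.\ the yes-instance $t=20$, $a=(10,14)$, $b=(6,4)$, $c=(4,2)$ has midpoints $12+5+3=t$ --- and even when positive it need not exceed the spread, so the windows of width $0.1\,t'$, $0.05\,t'$ and especially $0.025\,t'$ need not contain translates of the ranges. The rescaling step does not rescue this: multiplying everything by $\lambda$ scales the spread and $t'$ by the same factor, so ``$\lambda t>100S$'' (with $S$ the unscaled spread) creates no relative margin, and the claim that this dominance makes the system consistent with nonnegative shifts is where the argument breaks. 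Nonsingularity was never the issue; feasibility is.

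The repair is exactly the paper's proof: aim at fractions inside the windows that sum to exactly $1$ (such as $\tfrac47,\tfrac27,\tfrac17$), i.e.\ take $a_i'=a_i+4X$, $b_i'=b_i+2X$, $c_i'=c_i+X$, $t'=t+7X$, and check that each strict inequality reduces to $X$ exceeding one of finitely many linear expressions in $a_i,b_i,c_i,t$, so a polynomially bounded $X$ works and strong NP-hardness is preserved; no scaling is needed. Your closing remark inherits the same error: it is not enough that the left endpoints of the three intervals sum to less than $1$ --- you need the intervals to contain a triple of fractions summing to exactly $1$ (left endpoints summing below $1$ \emph{and} right endpoints summing above $1$), since otherwise no promise-respecting instance can have a matching at all.
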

\begin{proof}
    Suppose that we have an instance of Numerical 3DM $a_1,\dots,a_n$,
    $b_1,\dots,b_n$, $c_1,\dots,c_n$ with target sum $t$.
    We can transform the instance by defining a variable $X$ and setting
    $$a_i' = a_i + 4X,\qquad b_i' = b_i + 2X,\qquad c_i' = c_i+X,
    \qquad t' = t + 7X.$$
    
    Intuitively, when $X$ is sufficiently large,
    we have $t'\approx 7X$, so $a_i'\approx \frac 47 t'$,
    $b_i'\approx\frac 27 t'$, and $c_i'\approx\frac 17t'$,
    which makes all inequalities hold.
    To be more precise, we claim that, for any $X$ larger than a bound
    depending polynomially on $a_i$, $b_i$, $c_i$, and $n$,
    we have
    $$a_i' \in (0.5t', 0.6t'),\qquad
    b_i' \in (0.25t', 0.3t'),\quad\text{ and }\quad
    c_i' \in (0.125t', 0.15t').$$
    To see why, note that in order for the first inequality
    to hold, we must have
    $$a_i + 4X > 0.5(t+7X) \iff X > t-2a_i
    \quad\text{ and } \quad
    a_i + 4X < 0.6(t+7X) \iff X > 5a_i - 3t.$$
    Solving the other two inequalities, we find that we need
    $$X > \max\{t-4b_i, 10b_i-3t, t-8c_i, 20c_i-3t\}.$$
    Thus, we only need $X$ to be greater than the maximum of six linear expressions in $a_i$, $b_i$, $c_i$, and $t$.
    Taking the maximum across all $i$, we get that the working $X$ has polynomial size as claimed.
\end{proof}

\section{Overview of Reductions from Numerical 3DM}
\label{sec:overview_3DM}
The reductions in Sections \ref{sec:2D_wildcard}, \ref{sec:2xmxn_fill}, and \ref{sec:triangular}
all share a similar infrastructure, which we informally outline here. In this overview, we assume $D = 1$. We explain how to adapt this framework to $D = 2$ in Section \ref{sec:2xmxn_fill}.

We reduce from \textsc{Numerical 3DM}. Let $(a_i)_{i=1}^n$, $(b_i)_{i=1}^n$, and $(c_i)_{i=1}^n$ be the input numbers to match with target sum $t$. By Proposition \ref{prop:3DM_tweak}, we can assume that $a_i\in (0.5t, 0.6t)$, $b_i\in (0.25t, 0.3t)$, and $c_i\in (0.125t, 0.15t)$.
We also pick the following parameters:
\begin{align*}
    g &= \Theta(n) && \text{(the gap size)}, \\
    h &= \Theta(n^2) && \text{(the height of each block)}, \\
    m &= \Theta(n^3) && \text{(the unit width of each block)}.
\end{align*}

The structure of the reduction is as follows. The dimensions of the box are $D\times H\times W = 1\times (nh+(n+1)g)\times (mt+4g)$. The numbers $(a_i)_{i=1}^n$, $(b_i)_{i=1}^n$, and $(c_i)_{i=1}^n$ are represented by \vocab{block gadgets} $(\langle A_i \rangle)_{i=1}^n$, $(\langle B_i\rangle)_{i=1}^n$, and $(\langle C_i \rangle)_{i=1}^n$, instructions corresponding to the subchains $(A_i)_{i=1}^n$, $(B_i)_{i=1}^n$, and $(C_i)_{i=1}^n$ that can fold into cuboids of dimensions $1 \times h \times ma_i$, $1 \times h \times mb_i$, and $1 \times h \times mc_i$, respectively. We call these subchains \vocab{blocks}. Block gadgets typically consist of $h$ long segments (of common length $m x$) interspersed by $h-1$ length-$2$ segments, as shown in Figure \ref{subfig:block}, but details vary in different variants. Between every two consecutive block gadgets, we add a \vocab{wiring gadget}, a sequence of instructions that allows connecting the two blocks no matter where they are in the box. 

\begin{figure}
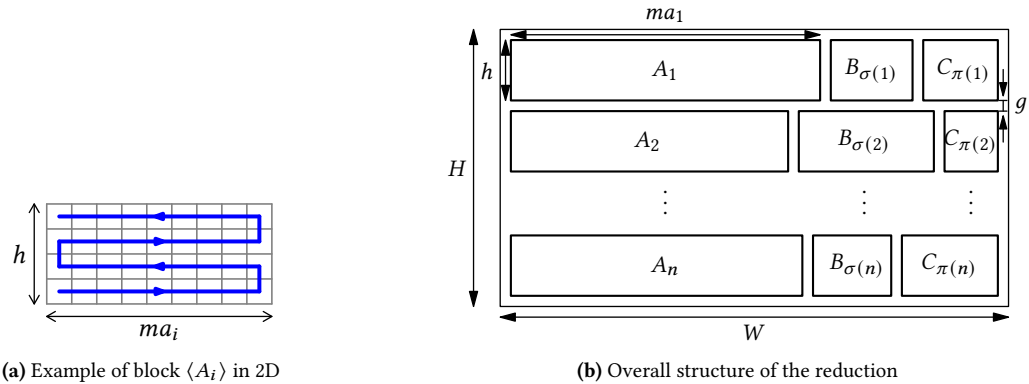

    \centering
    \begin{subfigure}{0.25\textwidth}
        \centering
        \asyinclude{asy/single_block.asy}
        \caption{Example of block $\langle A_i\rangle$ in 2D}
        \label{subfig:block}
    \end{subfigure}
    \begin{subfigure}{0.7\textwidth}
        \centering
        \asyinclude{asy/reduction_outline.asy}
        \caption{Overall structure of the reduction}
        \label{subfig:reduction_outline}
    \end{subfigure}
        \caption{The reduction}
\end{figure}

If a matching exists (i.e., there exist two permutations $\sigma$ and $\pi$ of $\{1, 2, \ldots, n\}$ such that $a_i + b_{\sigma(i)} + c_{\pi(i)}=t$ for all~$i$), then (ignoring the wiring gadgets) the blocks can be arranged into a perfect $1 \times nh\times mt$ rectangle, by aligning each triple of blocks $A_i$, $B_{\sigma(i)}$, and $C_{\pi(i)}$ together in the same row. Refer to Figure~\ref{subfig:reduction_outline}. Because our box is slightly larger than $1 \times nh\times mt$, we can place the blocks with a gap of $g$ between neighboring blocks and between each block and the boundary of the rectangular box. The gap $g$ is chosen so there is sufficient space for a subchain following the wiring gadget to connect all the blocks. Wires detour around blocks and do not cross; the explicit algorithm for folding the subchain will be given in Lemma~\ref{lem:wire}. Finally, depending on the variant, there may be additional instructions at the end of the program corresponding to the cubes needed to fill the remaining space in the box.

In the other direction, we also need to show that the existence of a chain satisfying the instructions forces the existence of a corresponding matching. One might expect $\langle A_i\rangle$, $\langle B_i\rangle$, and $\langle C_i\rangle$ to fold into blocks and pack into the grid equally-spaced apart, which will immediately force the existence of a matching. However, there are two major difficulties.
\begin{itemize}
    \item First, part of the chain corresponding to $\langle B_i\rangle$ and $\langle C_i\rangle$ can slide between the blocks $A_i$s, causing various unexpected configurations.
    \item Even if the blocks do fold into perfect rectangles, one must rule out the case when it is packed with unequal spacing between blocks.
\end{itemize} 
To get around these issues, we will prove Lemma \ref{lem:segment_packing} that will force the existence of matching even if blocks do not fold ideally.

\section{Common Results}
\label{sec:wirepacking}
In this section, we will collect arguments that are shared across different reductions from \textsc{Numerical 3DM}.
\subsection{Segment Packing Lemma}
This subsection concerns the part ``Chain $\Rightarrow$ Matching'' (i.e., proving the existence of Numerical 3DM matching).
As said earlier in Section \ref{sec:overview_3DM}, we need to force the existence of a matching even when blocks do not fold into ideal rectangles or are unevenly spaced. To do so, we will prove the ``Segment Packing Lemma'' (Lemma \ref{lem:segment_packing}), which will be used in Sections \ref{sec:2D_wildcard} and \ref{sec:triangular}. 

We first explain the motivation of this lemma. Recall a block gadget typically consists of $h$ long segments interspersed with $h-1$ length-2 segments. Therefore, for instance, any block $A_i$ must have $h$ segment of length $ma_i$. The idea is to view a block as a collection $h$ horizontal segments. 
Therefore, in any chain satisfying the program in the outline, we must have $3nh$ segments, $h$ of each length $ma_1,\dots,ma_n$, $mb_1,\dots,mb_n$, $mc_1,\dots,mc_n$ that were packed in a grid. This motivates the ``Segment Packing Lemma''.

The setup of this lemma is as follows.
Let $a_1,\dots,a_n$, $b_1,\dots,b_n$, $c_1,\dots,c_n$ be positive integers summing to $nt$. Assume the condition in Proposition \ref{prop:3DM_tweak}: for all $i$, we have
$$a_i\in (0.5t, 0.6t),\qquad b_i\in (0.25t, 0.3t),\quad\text{ and } \quad c_i\in (0.125t, 0.15t).$$ Let $m$ and $h$ be positive integers such that $m > H$, and consider a $H\times W$ rectangle.
Inside this rectangle, there are $3nh$ segments, each of which is horizontal. The segments are classified into $3n$ types:
\begin{itemize}
    \item $n$ types of \vocab{A-segments}: types $A_1$, $A_2$, \dots, $A_n$;
    \item $n$ types of \vocab{B-segments}: types $B_1$, $B_2$, \dots, $B_n$; and
    \item $n$ types of \vocab{C-segments}: types $C_1$, $C_2$, \dots, $C_n$.
\end{itemize}
\begin{lemma}[Segment Packing Lemma]
    \label{lem:segment_packing}
    Suppose that along with our setup, we also have that
    $$W < m(t+1)\quad\text{and}\quad H < nh + \frac{h}{40}$$
    and all of the following are true:
    \begin{itemize}
    \item There are exactly $h$ segments of each type.
    \item For all $1\leq i\leq n$, all segments of type $A_i$, $B_i$, and $C_i$ have lengths $ma_i$, $mb_i$, and $mc_i$, respectively.
    \item No two segments of the same type are more than $h$ rows vertically apart.
    \end{itemize}
    Then, there exists a matching of the instance $(a_i)_{i=1}^n$, $(b_i)_{i=1}^n$, and $(c_i)_{i=1}^n$ of Numerical 3DM.
\end{lemma}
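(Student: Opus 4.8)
The plan is to think of the rectangle as divided into $n$ horizontal "bands" of $h$ rows each (plus a small slack of $h/40$ rows), and argue that each band must contain, up to the matching, exactly one $A$-type's worth, one $B$-type's worth, and one $C$-type's worth of segments. First I would set up a counting/averaging argument on rows: there are $nh$ rows (ignoring slack) and $3nh$ segments, each horizontal and lying in a single row, so on average $3$ segments per row. Since the total length of all segments is $m\cdot nt$ and each row has width $W < m(t+1)$, no row can contain $4$ or more segments whose lengths sum to more than $mt$; combined with the size constraints $a_i > 0.5t$, $b_i > 0.25t$, $c_i > 0.125t$ (so every segment has length $> 0.125\,mt$, and any four segments total more than $0.5\,mt$, any... ) I would pin down that "most" rows contain exactly one $A$-segment, one $B$-segment, and one $C$-segment — because an $A$-segment plus a $B$-segment plus a $C$-segment already has length in $(0.875\,mt, 1.05\,mt)$, which essentially fills the row, leaving no room for a second segment of any type, while a row with two $A$-segments would need length $> mt$, exceeding $W$ only if... here I'd use $W < m(t+1)$ together with integrality ($ma_i + ma_j \ge m(t+1)$ since $a_i+a_j > t$ forces $a_i + a_j \ge t+1$).

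Next I would use the "no two segments of the same type are more than $h$ rows apart" hypothesis. This confines all $h$ copies of type $A_i$ to a window of at most $h$ consecutive rows; since there are $nh + h/40$ rows total and $nh$ segments of $A$-type overall, a pigeonhole/interval-scheduling argument shows the $n$ windows for the $n$ $A$-types are essentially disjoint and tile the rectangle into $n$ bands, each band being the exclusive territory of one $A_i$, one $B_j$, and one $C_k$. The slack $h/40$ is the budget that absorbs the imperfections (rows that are "bad" in the counting step, or slight overlaps of windows); I would show the number of bad rows is $O(1)$ per type or a small constant fraction, comfortably less than $h/40$, so in each band the dominant behavior is: $h - o(h)$ rows each carrying one $A_i$, one $B_j$, one $C_k$ with $ma_i + mb_j + mc_k \le W < m(t+1)$, hence $a_i + b_j + c_k \le t$; summing the band-sums over all $n$ bands gives $nt$ on the nose (the total), so every band-sum is exactly $t$, yielding $a_i + b_j + c_k = t$ and thus the desired matching $\sigma,\pi$.

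The main obstacle I anticipate is making the "bands are essentially disjoint and each is monochromatic in each of the three classes" step rigorous despite the slack and despite the possibility that a few segments sit in anomalous rows. The clean way is: (1) call a row \emph{good} if it contains exactly three segments, one of each class; bound the number of non-good rows using $W < m(t+1)$ and the length bounds (a non-good row either has $\le 2$ segments, which wastes $> 0.45\,mt$ of a scarce width budget, or has $\ge 4$, which is outright impossible by the width bound and integrality); (2) show a small number of wasted rows forces a small total height deficit, so at most $h/40$ rows can be non-good in aggregate; (3) for each class separately, the $h$-row-window condition plus the near-exact row count $nh$ forces the $n$ windows to be nested into disjoint length-$\approx h$ intervals — formalized by sorting the types by the position of their topmost segment and showing consecutive topmost positions differ by $\ge h - O(h/40)$; (4) intersect the three partitions (one per class) to get the common band decomposition, losing only another $O(h/40)$ rows at the seams. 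Once the band structure is in hand, the arithmetic ($a_i+b_j+c_k \le t$ per band, summing to the forced equality) is routine. I would be careful that the constant $1/40$ is chosen loosely enough to survive all these additive losses, which is presumably why the statement uses $h/40$ rather than something tighter.
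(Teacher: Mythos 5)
Your overall route (carve the rectangle into $n$ bands, show each band is owned by one type of each class, then sum band inequalities) is genuinely different from the paper's, but as written it has concrete errors and an unproven crux. First, the row classification is wrong: a row with four or more segments is \emph{not} impossible. Only rows containing an A-segment are tightly constrained (one A, then at most one B, then at most one C, by the integrality argument you correctly use); a row with \emph{no} A-segment can hold up to three B-segments, or up to seven C-segments, or mixtures such as three B's plus a C, all fitting comfortably inside $W<m(t+1)$. These A-free rows are precisely where B- and C-segments can ``hide,'' and they are the whole difficulty. Consequently your claimed bound ``at most $h/40$ rows are non-good in aggregate'' does not follow: the hypotheses only force that fewer than $h/40$ rows lack an A-segment; rows lacking a B or a C can be more numerous (one gets bounds like $3h/40$ and $13h/40$ by counting how many B-/C-segments the A-free rows can absorb, i.e.\ a total under $17h/40$, not $h/40$). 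Your area/waste mechanism cannot even recover that, because the lemma gives no lower bound on $W$, so a ``non-good'' row need not waste a fixed fraction of $mt$.

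Second, the step you yourself flag as the obstacle --- that the B-windows and C-windows align with the A-bands up to $O(h/40)$, and that intersecting the three partitions costs only $O(h/40)$ at the seams --- is exactly what is not proved, and it is delicate: same-type B- or C-segments may share a row (only in A-free rows, so this is controllable, but it must be argued), and with $n$ seams you must check the per-band losses stay within a global budget that is really of order $h$, not $h/40$. The paper avoids this entire structural claim: it only needs \emph{existence} of a matching, so it colors rows by residue modulo $h$, notes that the total number of bad rows is less than $h$, hence some residue class consists only of good rows; the ``no two same-type segments more than $h$ rows apart'' hypothesis guarantees each type meets that class at most once, so the class has exactly $n$ good rows in which every $A_i$, $B_j$, $C_k$ appears exactly once, and summing $a_i+b_j+c_k\le t$ over these rows against the total $nt$ forces equality. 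Either adopt that selection trick, or supply a full proof of your alignment claim (including the multi-segment A-free rows and the seam bookkeeping); as it stands the proposal does not go through.
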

The rest of the subsection is devoted to proving the lemma. Assume the lemma setup throughout.
First, we make the following observations.
\begin{proposition} 
\label{prop:row_properties}
We have the following.
\begin{enumerate}[label=(\roman*)]
\item Any row must have at most one A-segment. 
\item If a row contains one A-segment, it must contain at most one B-segment.
\item If a row contains one A-segment and one B-segment, then it must contain at most one C-segment. 
\end{enumerate}
\end{proposition}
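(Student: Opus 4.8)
The plan is to prove all three parts by the same counting argument: assume the row in question contains one more segment than claimed, lower‑bound the lengths of the segments it must then contain using the interval bounds of Proposition~\ref{prop:3DM_tweak}, and observe that their total already exceeds the available width. The only non‑arithmetic ingredient is that the $3nh$ segments are pairwise disjoint (they are subchains of pairwise distinct cubes, and within a block consecutive long segments are separated by a length‑$2$ turn), so the segments lying in any single row occupy disjoint cells of that row and hence have total length at most $W < m(t+1)$. I would state this observation once at the start.

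For part~(i), suppose a row contained two A‑segments, of types $A_i$ and $A_{i'}$ (possibly $i=i'$). Their lengths are $ma_i$ and $ma_{i'}$, and since $a_i,a_{i'}>t/2$ the integer $a_i+a_{i'}$ is strictly greater than $t$, hence at least $t+1$; so these two segments alone require at least $m(t+1)>W$ cells, a contradiction. Parts~(ii) and~(iii) are the same with different constants. If a row with an $A_i$‑segment also held two B‑segments $B_j,B_k$, then $a_i+b_j+b_k>\tfrac t2+\tfrac t4+\tfrac t4=t$; if a row with an $A_i$‑segment and a $B_j$‑segment also held two C‑segments $C_k,C_\ell$, then $a_i+b_j+c_k+c_\ell>\tfrac t2+\tfrac t4+\tfrac t8+\tfrac t8=t$. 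In each case the left side is an integer exceeding $t$, so it is at least $t+1$, and multiplying by $m$ makes the required length at least $m(t+1)>W$, contradicting the disjointness bound.

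There is essentially no obstacle: the proposition is engineered by Proposition~\ref{prop:3DM_tweak} so that the ``critical'' multisets $\{a,a\}$, $\{a,b,b\}$, and $\{a,b,c,c\}$ each have total fractional weight exactly $1$, which together with the slack bound $W<m(t+1)$ leaves no room. The only two points worth a sentence of care are the disjointness claim (so that "total length $\le W$" is legitimate) and the use of integrality of $a_i,b_j,c_k,t$ to upgrade each strict inequality "sum $>t$" to "sum $\ge t+1$".
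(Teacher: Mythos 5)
Your proposal is correct and follows essentially the same argument as the paper: in each case you lower-bound the total length of the hypothesized segments in a row and contradict the width bound $W < m(t+1)$. The only cosmetic difference is that the paper converts each length to a fraction of $W$ (via $ma_i \geq 0.5m(t+1) > 0.5W$, etc.), whereas you sum the integers $a_i, b_j, c_k$ and invoke integrality once to pass from ``$>t$'' to ``$\geq t+1$''; both routes use the same integrality and disjointness facts, which you make slightly more explicit.
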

\begin{proof}
Note that the length of an A-segment is $ma_i \geq 0.5m(t+1) > 0.5W$, the length of a B-segment is $mb_i \geq 0.25m(t+1) > 0.25W$, and the length of a C-segment is $mc_i \geq 0.125m(t+1) > 0.125W$. With these, we now prove each part as follows.
\begin{enumerate}[label=(\roman*)]
\item More than one A-segment will span a width exceeding $W$, a contradiction.
\item If a row contains one A-segment and more than one $B$-segment, the length will be at least $ma_i + 2mb_i > 0.5W + 2 \cdot 0.25W = W$, a contradiction.
\item If a row contains one A-segment, one $B$-segment, and more than one $C$-segment, the length will be at least $ma_i + mb_i +2mc_i > 0.5W + 0.25W + 2 \cdot 0.125W = W$, a contradiction.
\qedhere
\end{enumerate}
\end{proof}
From Proposition \ref{prop:row_properties}, each row in the grid must be of one of the following four categories.
\begin{itemize}
    \item A \emph{good row}, which contains exactly one A-segment, one B-segment, and one C-segment.
    \item An \emph{A-bad row}, which contains no A-segment.
    \item A \emph{B-bad row}, which contains one A-segment but contains no B-segment.
    \item A \emph{C-bad row}, which contains one A-segment, one B-segment, but no C-segment.
\end{itemize}
Let $n_\text{good}$, $n_A$, $n_B$, and $n_C$ denote the number of good rows, $A$-bad rows, $B$-bad rows, and $C$-bad rows, respectively.
The critical claim is the following.
\begin{claim}
    \label{claim:few_bad_rows}
    $n_A+n_B+n_C < h$.
\end{claim}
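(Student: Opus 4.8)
The plan is to play off two kinds of bookkeeping against each other: an \emph{exact} accounting of how segments of each type are distributed over the four row-categories of Proposition~\ref{prop:row_properties}, and \emph{crude packing bounds} (coming from the width hypothesis $W < m(t+1)$) on how many $B$- and $C$-segments can physically coexist in one ``bad'' row. For the exact part, first note that since every row falls into exactly one category, $H = n_{\text{good}} + n_A + n_B + n_C$. Counting $A$-segments by the row containing them, using that exactly the good, $B$-bad, and $C$-bad rows contain an $A$-segment and each contains exactly one (Proposition~\ref{prop:row_properties}(i)), while there are $nh$ $A$-segments in all, gives $nh = n_{\text{good}} + n_B + n_C$. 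Subtracting, $n_A = H - nh < h/40$ by the hypothesis $H < nh + h/40$; this already kills the $n_A$ term, so it remains to control $n_B$ and $n_C$.

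Applying the same idea to the smaller types yields two ``displaced segment'' identities. A $B$-segment can only lie in a good row, a $C$-bad row, or an $A$-bad row (it is excluded from $B$-bad rows by definition and, by Proposition~\ref{prop:row_properties}(ii), at most one sits in any row that has an $A$-segment); good and $C$-bad rows carry exactly one each. Letting $\beta$ be the number of $B$-segments lying in $A$-bad rows, we get $nh = n_{\text{good}} + n_C + \beta$, and comparing with $nh = n_{\text{good}} + n_B + n_C$ gives the clean identity $n_B = \beta$. Likewise a $C$-segment lies in a good row (one each), a $B$-bad row, or an $A$-bad row; letting $\gamma$ be the total number of $C$-segments in $B$-bad and $A$-bad rows, $nh = n_{\text{good}} + \gamma$, whence $n_B + n_C = \gamma$.

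Now the packing bounds. Since an $A$-segment has length $> 0.5mt$, a $B$-segment $> 0.25mt$, and a $C$-segment $> 0.125mt$, and segments sharing a row are disjoint and fit in width $W < m(t+1)$, a short case check shows that once $t$ exceeds a small absolute constant (which we may assume, as the reduction produces $t$ as large as desired via Proposition~\ref{prop:3DM_tweak}) an $A$-bad row holds at most $4$ $B$-segments and at most $8$ $C$-segments, while a $B$-bad row — already carrying one $A$-segment — holds at most $4$ $C$-segments. Hence $\beta \le 4n_A$ and $\gamma \le 8n_A + 4n_B$. Feeding these into the identities: $n_B = \beta \le 4n_A$, and then $n_B + n_C = \gamma \le 8n_A + 4n_B \le 24 n_A$, so
\[
n_A + n_B + n_C \le n_A + 24 n_A = 25 n_A < 25\cdot\frac{h}{40} < h,
\]
which is the claim.

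The one genuinely delicate step is the packing bound in the third paragraph: the naive ``each bad row wastes one segment-slot, and space is limited'' intuition is wrong, because a bad row is not merely missing a segment but can host \emph{several extra} segments of the smaller types — it is precisely the width constraint $W < m(t+1)$, combined with the lower bounds $a_i > 0.5t$, $b_i > 0.25t$, $c_i > 0.125t$, that caps this overflow and lets the displaced segments $\beta,\gamma$ be charged back to $n_A$. One should double-check that the strict inequalities (equivalently, the mild lower bound on $t$) leave enough room for the constants; everything else is linear bookkeeping.
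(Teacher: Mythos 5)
Your proof is correct in substance and follows essentially the same route as the paper: count segments of each type across the four row categories, and use the width bound $W < m(t+1)$ to cap how many $B$- and $C$-segments an $A$-bad (or $B$-bad) row can absorb; your exact ``displaced segment'' identities $n_B=\beta$ and $n_B+n_C=\gamma$ are just a tidier packaging of the paper's inequalities $nh\le 3n_A+n_C+n_{\text{good}}$ and $nh\le 7n_A+3n_B+n_{\text{good}}$. The one real difference is your packing constants $4/8/4$, which rest on the raw bounds $mb_i>0.25mt$, $mc_i>0.125mt$ and hence need $t$ to exceed a small constant --- an assumption not in the lemma's hypotheses, so strictly a gap in a standalone proof of the lemma as stated. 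It is easily repaired: since the $a_i,b_i,c_i$ are integers, $c_i>0.125t$ forces $c_i\ge 0.125(t+1)$, so $mc_i>0.125W$ (and similarly for $A$- and $B$-segments, as already recorded in Proposition~\ref{prop:row_properties}), which yields the unconditional constants $3/7/3$ used by the paper and only improves your final estimate.
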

\begin{proof}
    First, we note that there are $H < nh + h/40$ rows. Of these, $nh$ must contain an $A$-path, so 
    $$n_A < \frac{h}{40}.$$

    Next, there are $nh$ B-segments. The good rows and C-bad rows contain at most one B-segment. Moreover, since the length of each B-segment is strictly greater than $0.25W$, each A-bad row contains at most $3$ B-segments. Thus,
    \begin{align*}nh \leq 3n_A + n_C + n_{\text{good}} \implies
     n_B &\leq 2n_A + (n_A+n_B+n_C+n_{\text{good}}) - nh \\
     &= 2n_A + \frac{h}{40}  < \frac{3h}{40}.
     \end{align*}

     Finally, there are $nh$ C-segments. The good rows contain at most one C-segment. Moreover, the length of each C-segment is strictly greater than $0.125W$, so each A-bad row contains at most $7$ C-segments and each B-bad row contains at most $3$ C-segments. Thus,
     \begin{align*}nh \leq 7n_A + 3n_B + n_{\text{good}} \implies
     n_C &\leq 6n_A + 2n_B + (n_A+n_B+n_C+n_{\text{good}}) - nh \\
     &= 6n_A + 2n_B + \frac{h}{40}  < \frac{13h}{40}.
     \end{align*}
     
     Summing these three bounds gives the claim.
\end{proof}
Finally, we are ready to prove that the solution to an instance of Numerical 3DM exists. 
\begin{proof}[Proof of Lemma \ref{lem:segment_packing}]
Color each row by its residue modulo $h$. Thus, each color has either $n$ or $n+1$ rows. By Claim~\ref{claim:few_bad_rows}, there are less than $h$ bad rows, so there is a color $c$ such that all rows of that color are good. Moreover, each block gadget spans the height at most $h$, so each block covers at most one row of color $c$. Thus, there are exactly $n$ rows of color $c$. For each row, if that row contains a path from blocks $\langle A_i\rangle$, $\langle B_j\rangle$, $\langle C_k\rangle$, then by counting the lengths of the path, we have
$$ma_i + mb_j + mc_k \leq W < m(t+1) \implies a_i+b_j+c_k\leq t.$$
Summing this inequality for each row gives $nt \leq nt$, so all inequalities must be equalities, giving that $a_i+b_j+c_k=t$, forming the desired matching.
\end{proof}
\subsection{Connecting Wires}
This subsection concerns the wiring part and is the most technical part of this paper. It guarantees that, if the gap is large enough, there exists a way to wire blocks $A_1,\dots,A_n$, $B_1,\dots,B_n$ and $C_1,\dots,C_n$, no matter what positions they are placed in the packing. This lemma was adapted from \cite[Lemma 5]{foldingpolyhedra}. The existence will be proved by providing an algorithm to draw wires.

\begin{figure}[htp]
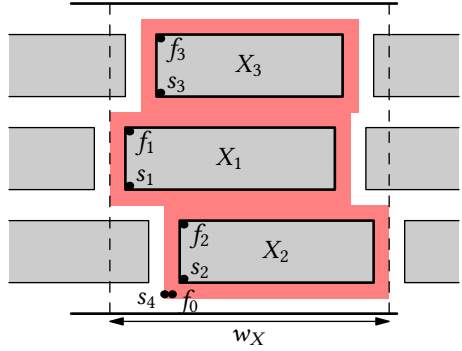

    \centering
    \asyinclude{asy/wiring_setup.asy}
    \caption{Example of the setup wire packing when $n=3$. The red area represents the available space.}
    \label{fig:wiresetup}
\end{figure}

The setup for this lemma is depicted in Figure \ref{fig:wiresetup} and goes as follows:
in a grid of \vocab{squares}, there are \vocab{rectangles} $X_1,X_2,\dots,X_n$, of height $h'$ squares
and widths $x_1,x_2,\dots,x_n$ squares packed in a rectangular $H'\times w_X$ grid (height $H'$). Note that while ``rectangles'' and ``blocks'' refers to the same object, ``squares'' and ``cubes'' are not the same; when applying this lemma to prove the construction of the wiring gadgets, each square is filled with $2\times 2$ cubes. This will be discussed further in Section~\ref{sec:2D_wildcard}. We use these terms here to avoid confusion when referring to the length. For example, if a rectangle has height of $h'$ squares in this section, when applying to the actual instance of \textsc{2D snake cube}, a corresponding ``block'' would have height $h = 2h'$ cubes.

A matching forces each row to contain at most one rectangle (i.e., rectangles are not overlapped and cannot be placed side by side), but they are in arbitrary order from top to bottom. 
Suppose the gap between two rectangles has width $g'$ squares. Define \vocab{available space} of a rectangle to be a set of squares outside it that forms an extension to a rectangle by $g'/2$ squares on each side. An \vocab{overall available space} is the union of all available spaces across all rectangles, which is shown as the red area in Figure~\ref{fig:wiresetup}.
For each $i$, let $s_i$ and $f_i$ be the squares at the bottom left and top left corners of $X_i$, respectively. We also let $f_0$ and $s_{n+1}$ be the beginning and the ending point in the available space lying below every rectangle at least $4n+4$ squares apart; they are both horizontally at least $10n$ tiles away from the left edges of all rectangles. 

For each $i\in\{0,1,\dots,n\}$, we want to connect from square $f_i$ to square $s_{i+1}$ with a wire $W_i$
of length $\ell_i$, defined as a path of $\ell_i$ squares such that any two consecutive squares share a common side.
This is guaranteed to be possible if the gap between rectangles is large enough, as we explain below.
\begin{lemma}[Wire Lemma]
\label{lem:wire}
Assume the setup in the above three paragraphs. Assume that $\min_i x_i > w_X/2$, $H < w_X$ and the gaps between rectangles have width at least $g'\geq 100n$ squares. For each $i=0,1,\dots,n$, let $\ell_i$ be an even integer in $[8nw_A, 12nw_A]$. Then, one can draw $n+1$ disjoint wires $W_0,\dots,W_n$ in the overall available space, where $W_i$ has length exactly $\ell_i$
and $W_i$ connects $f_i$ to $s_{i+1}$
for all $i\in\{0,1,\dots,n\}$.
Furthermore, no two cells in different wires $W_i$ and $W_j$ are adjacent.
\end{lemma}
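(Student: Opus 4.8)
The plan is to give an explicit routing algorithm that threads all $n+1$ wires through the overall available space, and then to argue that the algorithm always succeeds and can be tuned to hit the prescribed even length $\ell_i$. The key structural fact is that the available space contains, for each rectangle $X_j$, a ``frame'' of width $g'/2 \ge 50n$ squares wrapping around $X_j$, and consecutive frames overlap in the gaps between stacked rectangles; below every rectangle there is a wide common corridor (the strip containing $f_0$ and $s_{n+1}$). So I would first set up a \emph{highway}: a single horizontal corridor running along the bottom of the packing, at least $4n+4$ squares tall, from which each rectangle's left frame is reachable by a vertical ``on-ramp'' on the far left (at least $10n$ squares left of every rectangle, as guaranteed). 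Each wire $W_i$ will go: from $f_i$ (top-left corner of $X_i$, or the start point for $i=0$), down the left frame of $X_i$ to the highway, along the highway to a point below $X_{i+1}$, then up the left frame of $X_{i+1}$ to $s_{i+1}$ (bottom-left corner).

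Second, I would handle disjointness and non-adjacency. Within each rectangle's left frame there is room for $O(n)$ vertical strands separated by empty columns, which is more than enough since at most two wire-endpoints ($f_i$ and $s_i$) touch a given rectangle $X_i$, contributing at most $O(1)$ strands per frame; choosing distinct columns with a one-square buffer between them gives non-adjacency there. In the highway, since it is $\ge 4n+4$ rows tall and there are $n+1$ wires, I assign each wire its own pair of adjacent rows (one for the ``outgoing'' leg, which is fine, or simply a dedicated horizontal track) with empty rows as separators; a wire entering the highway from the left frame of $X_i$ and leaving toward $X_{i+1}$ only needs to change tracks near those two rectangles, and the left on-ramps are far enough apart horizontally (the frames of different rectangles are separated by the $g' \ge 100n$ gaps, and horizontally the rectangles all share the same left region but at different heights) that the track-changes do not collide — here I'd route the on-ramps/off-ramps at staggered horizontal offsets so that the vertical portions of different wires never share or neighbor a column. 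The quantitative slack ($g' \ge 100n$, endpoints $10n$ and $4n+4$ apart) is exactly what makes all these ``$O(n)$ strands in a width-$\Theta(n)$ region'' arguments go through.

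Third, I would fix the length. The straightforward route described above has some length $\ell_i^{0}$; I need to show $\ell_i^{0}$ can be adjusted to any even value in $[8nw_X, 12nw_X]$ (note $w_A$ in the statement should be $w_X$). Since $H < w_X$ and $\min_i x_i > w_X/2$, the unavoidable part of each route (down one frame, across the bottom, up another frame) has length $O(n w_X)$ but with a constant I can keep below $8nw_X$ by routing tightly; then, to \emph{lengthen} to the target, I insert a detour in the highway — a ``comb'' of vertical back-and-forth excursions into the tall corridor — each full up-down excursion adding an even number of squares, and single-column shifts adding a tunable amount, so I can hit any value of the right parity in the window. The window width $\Theta(n w_X)$ and the corridor height $\Theta(n)$ and available horizontal room $\Theta(w_X)$ make the comb fit. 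Parity is controlled because the corridor height can be taken even (or we add one final unit jog).

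The main obstacle I expect is the simultaneous disjointness-and-non-adjacency bookkeeping in the shared left region and the highway: all $n+1$ wires must descend past the same stack of rectangles and traverse the same bottom corridor, so I must carefully schedule which horizontal track and which vertical column each wire uses, and prove no two wires (or their ``thickenings'' by one square) ever meet. I'd organize this by a clean invariant — e.g., wire $W_i$ occupies highway track $2i$ and uses left-frame columns in a dedicated sub-band depending on $i$ — reducing the whole argument to checking a handful of local pictures near each rectangle corner, all of which fit because every relevant gap is $\Omega(n)$ while only $O(n)$ objects compete for it. Citing the analogous construction in \cite[Lemma~5]{foldingpolyhedra} should let me compress the most tedious casework.
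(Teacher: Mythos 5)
Your overall skeleton (route along the left side and the bottom corridor, exploit the $\Omega(n)$ slack for $O(n)$ separated strands, pad to the exact even length, control parity) is in the right spirit, but two of your central steps would fail as described. First, the disjointness scheme. Your invariant ``wire $W_i$ uses highway track $2i$ and a dedicated left-frame column band'' cannot be realized for an arbitrary vertical permutation of the rectangles: every wire must touch its rectangles' corners $f_i$ and $s_{i+1}$, so near each corner a wire has to cut across all lower-indexed lanes, which are occupied by through-traffic at essentially every height adjacent to that rectangle. More fundamentally, if all routing stays in the left strip plus the bottom corridor (a simply connected region), the terminals lie on the boundary of a disk and the pairing $f_i \mapsto s_{i+1}$ is a \emph{crossing} chord diagram for many permutations (already for $n=2$ with $X_1$ above $X_2$, the chords $f_0$--$s_1$ and $f_1$--$s_2$ cross), so no fixed non-crossing lane assignment exists. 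Some wires must detour around entire rectangles; this is exactly the paper's ``adjust crossing'' step, in which later-drawn wires wrap around the reserved space of earlier rectangles, and it is also what produces the $2(n-1)w_X$ term in the length accounting and hence the lower bound $\ell_i \geq 8nw_X$. Treating this as ``a handful of local pictures near each rectangle corner'' skips the crux of the lemma.

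Second, the length adjustment. You place the padding ``comb'' in the bottom highway, but that corridor has height only about $g'/2=\Theta(n)$ and width at most $w_X$, hence area $\Theta(n w_X)$, while the total padding required is typically $\Theta(n)$ wires each needing slack of order $n w_X$ (a tight route between vertically adjacent rectangles has length $O(w_X + H') = O(w_X)$, yet $\ell_i \geq 8 n w_X$), i.e.\ $\Theta(n^2 w_X)$ in total. The shared corridor simply does not have room. The paper avoids this by reserving bands of height $40n$ above and below \emph{each} rectangle $X_i$, kept free during routing, and coiling the two halves $U_i, V_i$ of each wire in their own rectangle's bands; each half then has private padding capacity greater than $10 n w_X \geq \ell_i/2$, and the parity argument (wire-length parity is fixed by its endpoints) finishes the adjustment. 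Your parity remark and your observation that $w_A$ should read $w_X$ are correct, but without the wrap-around routing and the per-rectangle padding reserves the proof does not go through.
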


\begin{remark}
In Sections \ref{sec:2D_wildcard} and \ref{sec:triangular}, we can set $\ell_i = 8nw_X$. 
For Section \ref{sec:2xmxn_fill}, we need to set $\ell_i = 12nw_X$ since wires have to detour around the ``shelf", an additional structure that will be defined there.
\end{remark}
\begin{proof}

We mark squares $m_0=f_0, m_1,\dots,m_n,m_{n+1}=s_{n+1}$, all in the same row, with distance at least $4$ apart from right to left. The square $m_i$ will be in the middle of the wire $W_i$. Thus, it suffices to draw $2n$ wires: $U_i$ from $m_{i-1}$ to $s_i$ and $V_i$ from $m_i$ to $f_i$ for each $i=1,2,\dots,n$. After that, we can set $W_i$ to be concatenation of $V_i$, a square $m_i$, and $U_{i+1}$ for all $i\in\{1,\dots, n-1\}$. Moreover, let $W_0 = U_1$ and $W_n = V_n$. Inductively running $i$ from $1,2,\dots,n$, we will draw $U_i, V_i$ after $U_1,V_1,\dots,U_{i-1},V_{i-1}$ has been drawn. The following diagram illustrates the notion described.
\[
\underbrace{f_0 = m_0 \xrightarrow{U_1}}_{W_0} \langle X_1\rangle \underbrace{\xrightarrow{V_1} m_1
 \xrightarrow{U_2} }_{W_1}\langle X_2\rangle \xrightarrow{V_2} m_2 \cdots 
  \xrightarrow{U_i} \langle X_i\rangle\underbrace{ \xrightarrow{V_i} m_i   \xrightarrow{U_{i+1}} }_{W_i} \langle X_{i+1}\rangle\xrightarrow{V_{i+1}}\cdots  
    \xrightarrow{U_n} \langle X_n\rangle \underbrace{\xrightarrow{V_n}}_{W_n} m_n = s_{n+1}
\]

The process of drawing wires is depicted in Figure \ref{fig:wire_overall} and consists of two stages.
\begin{enumerate}[label=(\alph*)]
\item For each $i$, draw $U_i$ and $V_i$ without crossing the previously-drawn wires so that both $U_i$ and $V_i$ have length less than $\ell_i/2$.
\item Adjust the length of $W_i$ so that its length is exactly $\ell_i$.
\end{enumerate}

We first explain how to do (b). For each $i$, reserve the space of height $40n$ on the top and bottom of the rectangle $X_i$. This space should not be used by any wires during stage (a). Thus, we have the space of area greater than $ (40n)(w_X/2) = 20nw_X$ on each side for any $i$. Then, for each incoming wire $U_i$ and $V_i$, adjust the wire $U_i$ to coil around the top of $X_i$ and the wire $V_i$ to coil around the bottom of $X_i$ as shown in Figure \ref{fig:adjust_area}. This way, we can add any even number of lengths of $U_i$ and $V_i$ not exceeding half the area, which is greater than $\frac{20nw_X}{2} > 6nw_X \geq  \ell_i/2$. Since the parity of the length of the wire is solely determined by the start and the ending point, one can get the wires to have lengths exactly $\ell_i$.

\begin{figure}[htp]
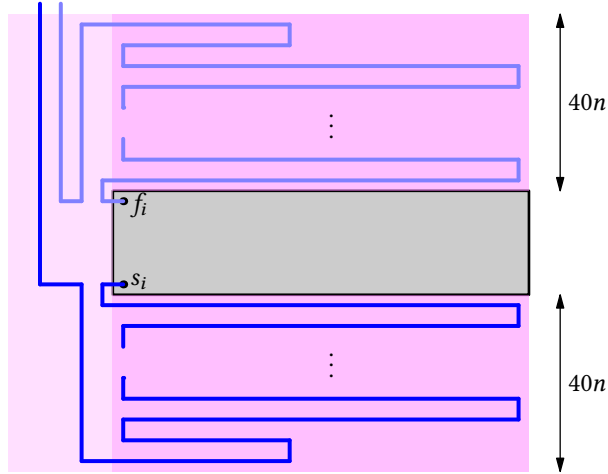

    \centering
    \asyinclude{asy/reserved_space_1}
    \caption{Packing of wires in the reserved spaces on the top and bottom of the rectangle.}
    \label{fig:adjust_area}
\end{figure}

The rest of the proof is concentrated on (a). Reserve more space of width $10n$ and height $h+80n$ on the left of each rectangle $X_i$ and previously reserved space.

Consider a broken segment $L$ along the edges in the grid formed by vertical and horizontal segments satisfying the properties that each vertical segment of $L$ is the left edges of the reserved space of some rectangles, extended by $g'/2$ on each end, and horizontal segments are allowed only at the middle of a gap between two consecutive rectangles to join the ends of two neighboring vertical segments, as shown in Figure \ref{fig:wire_overall}. 
Since $f_0$ and $s_{n+1}$ are at least $10n$ squares horizontally away from the left edges of all rectangles, $m_i$ is on the left side of all $X_i$ rectangles and $L$ for all $i$.

\begin{figure}[htp]
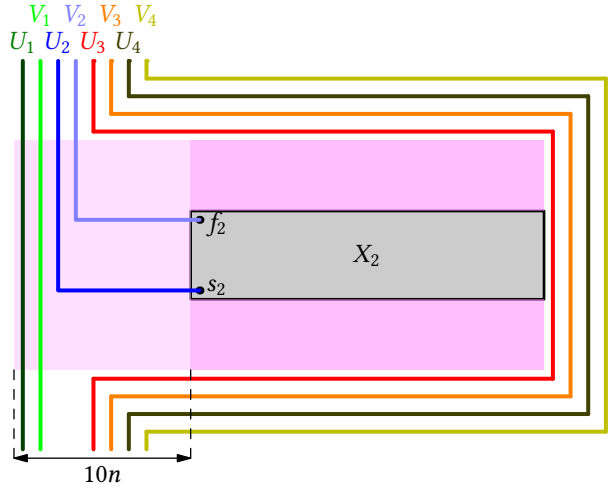

    \centering
    \asyinclude{asy/reserved_left.asy}
    \caption{Example of wires after Adjust Crossing step when $X_2$ is placed above $X_1$, $X_3$, and $X_4$. Since wires $U_1$ and $V_1$ are placed before $U_2$ and $V_2$, they do not have to detour around $X_2$.
    However, since wires $U_3$, $V_3$, $U_4$, and $V_4$ are placed after $U_2$ and $V_2$, they have to detour around $X_2$ (The lengths depicted are very not to scale.)}
    \label{fig:wire_at_block}
\end{figure}

\begin{figure}[htp]
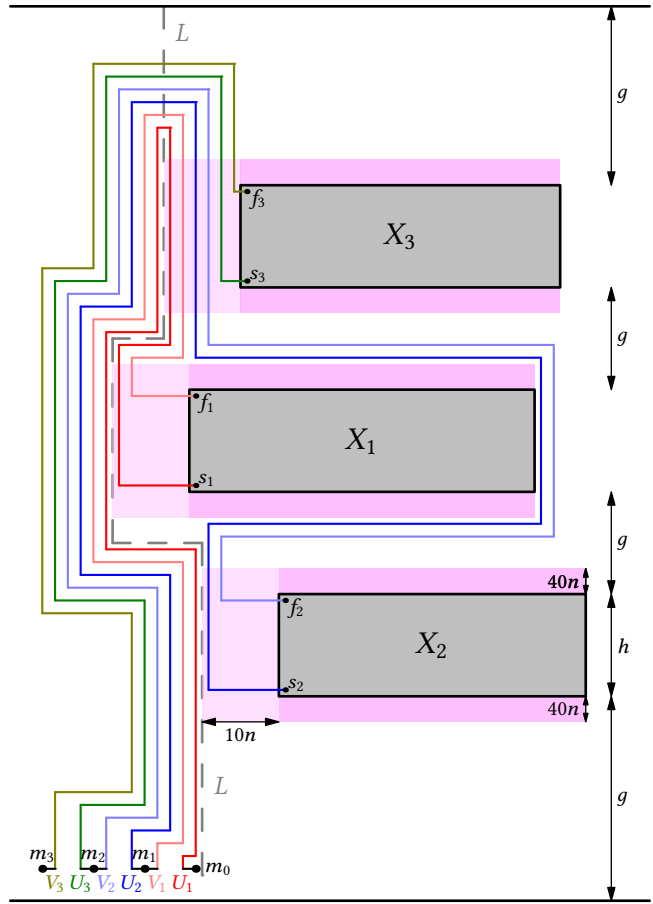

    \centering
    \asyinclude{asy/wiring_overall.asy}
    \caption{Example of wire packing when $n=3$ without length adjustment. (The lengths depicted are very not to scale.)}
    \label{fig:wire_overall}
\end{figure}

Now, we will explain how to place $U_i$ from $m_{i-1}$ to $s_i$ and $V_i$ from $m_i$ to $f_i$. Call all squares in the reserved spaces and rectangles occupied.
The process is in three steps.
\begin{enumerate}
    \item \textbf{Travel up.} We draw the wires $U_i$ and $V_i$ from $m_{i-1}$ and $m_i$, respectively to the topmost gap so that they are as close to $L$ as possible but still leaving $1$ square gap from the previous wires.  Since there are $2n$ wires, this ensures that all wires are at most $4n$ squares from all the occupied space. Thus, including all the reserved space, the wire will be at most $4n + 40n < g'/2$ away from the edges of rectangles with the condition that this part of the wires will not overlap with the rectangles or other wires. 
    \item \textbf{Drop down.} We drop these wires down on the right of $L$ to the left reserve space of $X_i$ with $1$ square gap from the previous wires (or from $L$ if they are the first pair of wires). Now, we can trace $U_i$ to $s_i$ and $V_i$ to $f_i$. During the drop, these wires may cross existing wires $U_j$ and $V_j$ for some $j<i$. Since there are $2n$ wires, the reserved space with the width $10n>(2n)(2)$ is enough to make sure that the wire will not overlap the rectangle, and that the remaining space is connected.
    \item \textbf{Adjust crossing.} Wires $U_i$ and $V_i$ may intersect wires $U_j$ and $V_j$, which are connected to rectangle $X_j$, for some $j<i$ when $U_i$ and $V_i$ attempt to go through the reserved space on the left of $X_j$. If this happens, wrap $U_i$ and $V_i$ precisely around the reserve space of $X_j$, which means that we will leave $1$ square gap from any rectangles $X_j$ or wires $U_k$ and $V_k$ ($k<i$) that previously wrapped around $X_j$. This is always possible because there are at most $2n$ wires that will wrap each rectangle $X_i$. As a result, the width of the extension $4n < 10n = g'/2 - 40n$ outside their reserved space is sufficient to contain $2n$ wires stacked on each edge of $X_{i}$ with one square gap between different wires. Note that we do not allow wire $U_i$ or $V_i$ to wrap around rectangle $X_j$ for $j > i$.
\end{enumerate}

Now, we calculate the length of wires for only one of $U_i$ or $V_i$ needed before adjusting the length by considering the horizontal length and vertical length separately and then combining. 

The vertical distance that each wire travels is at most $2H'$ because each wire $U_i$ travels directly up first and then travels down to the position without going up again, and the wire travels each direction (up or down) for at most $H'$ in total.

For the total horizontal distance of $U_i$ and $V_i$, wires travel horizontally at most once between each gap in each of traveling up and dropping down processes, so wires travel horizontally for at most $n+1$ times each process. Each time, a wire travels for the distance at most $w_X/2$. For adjusting crossing, each wire goes left and right once along the edge of those rectangles. However, there are at most $n-1$ rectangles to detour, contributing to at most $2(n-1)w_X$ squares in length. Therefore, the total distance each wire travels is at most $2(n+1)(w_X/2) + 2(n-1)w_X +2H' < 4nw_X$, so the total distance of the wire $W_i$, which combines both $V_i$ and $U_{i+1}$, is at most $8nw_X$. 
\end{proof}

\subsection{Filling the Space}
The following lemma gives a simple condition that guarantees a Hamiltonian cycle in a grid graph, which will be useful in densely filling the space in Sections \ref{sec:2D_wildcard} and \ref{sec:2xmxn_fill}. The reference of this lemma is from \cite{hampathfilling}, but we give a full statement and proof for completeness.
\begin{lemma}[Hamiltonicity of $2 \times 2$ Polygrid]
    \label{lem:hamiltonian_2x2}
    Let $T$ be a connected set of squares in a square grid. Let $T'$ be the set $T$ when each square in $T$ is replaced by a $2\times 2$ subgrid. Then, there is a Hamiltonian cycle visiting every square in $T'$.

    Moreover, the Hamiltonian cycle has the following property: for each $2\times 2$ subgrid obtained from a single square in $T$ and for each side on the boundary, the two squares on that side are adjacent in a path.
\end{lemma}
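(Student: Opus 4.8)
The plan is to build the Hamiltonian cycle by induction on the number of squares in $T$, peeling off one square at a time while maintaining a stronger invariant that makes the induction go through. The key strengthening is exactly the ``moreover'' clause: I will carry along the hypothesis that the cycle, restricted to any single expanded square, consists of paths that pair up the two cells on each boundary side of that $2\times 2$ block (so locally the cycle looks like one of a few canonical patterns inside each $2\times 2$ block, and in particular each boundary edge of a block is ``crossed'' by the cycle in a controlled way). Formally, first I would fix a spanning tree of the adjacency graph of $T$ and order the squares $q_1,\dots,q_N$ so that each $q_j$ (for $j\ge 2$) is a leaf when added, i.e.\ $q_j$ is adjacent to exactly one square among $q_1,\dots,q_{j-1}$; equivalently, process the tree in reverse BFS/DFS order so every prefix is connected.

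The base case $N=1$ is a single $2\times 2$ block, whose four cells form a $4$-cycle; this visibly satisfies the boundary-pairing property on all four sides. For the inductive step, suppose we have a Hamiltonian cycle $\mathcal{C}$ on $T'_{j-1}$ (the expansion of $q_1,\dots,q_{j-1}$) with the stated property, and we wish to attach the expanded square $Q$ corresponding to $q_j$, which shares exactly one side $\sigma$ with some already-placed square $q_i$. On the $q_i$ side of $\sigma$, the two cells $u_1,u_2$ are consecutive on $\mathcal{C}$ by the invariant, so the edge $u_1u_2$ lies on $\mathcal{C}$. I would delete this edge and reroute through $Q$: the two new cells $v_1,v_2$ adjacent to $u_1,u_2$ across $\sigma$ plus the other two cells $v_3,v_4$ of $Q$ can be threaded as $u_1\,\text{--}\,v_1\,\text{--}\,v_3\,\text{--}\,v_4\,\text{--}\,v_2\,\text{--}\,u_2$ (a Hamiltonian path of the $2\times 2$ block $Q$ from $v_1$ to $v_2$), restoring a single cycle that now visits all of $T'_j$. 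It remains to check that the new cycle still has the boundary-pairing property: inside $Q$ one checks directly from the chosen threading that on each of the four sides of $Q$ the two cells are adjacent on the cycle (the side $\sigma$ now has $v_1,v_2$ separated, but $\sigma$ is an interior side of $T'_j$ so it is not a boundary side — this is the point where I must be careful that the property is only claimed for boundary sides), and inside $q_i$ the only change is that the edge $u_1u_2$ was replaced by a path $u_1\cdots u_2$, so $u_1$ and $u_2$ are still joined along the cycle, preserving the property for every side of $q_i$; all other blocks are untouched.

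The main obstacle I anticipate is \emph{bookkeeping the invariant precisely} so that the reroute is always legal: I need the two cells on the shared side (viewed from the old piece) to be consecutive on the current cycle, which is guaranteed only if I state the invariant for \emph{all} sides of each block, including sides that are currently on the boundary of $T'_{j-1}$ but may become interior later — so the invariant should really say ``for every side $\tau$ of every block, if $\tau$ is not yet covered by a neighbor then the two cells on $\tau$ are cycle-adjacent,'' and I must verify the reroute both consumes such a side (at $\sigma$, from the $q_i$ side) and does not prematurely break any side that a future square will attach to. A clean way to avoid this subtlety is to do the induction in the order that \emph{adds leaves}, so that when $q_j$ is attached the side $\sigma$ is the unique shared side and hence was a boundary side of $T'_{j-1}$, for which the invariant directly supplies the consecutive pair $u_1,u_2$; and to observe that attaching $q_j$ only ever destroys cycle-adjacency on the side $\sigma$ itself, which is now interior, so no boundary side of $T'_j$ is ever harmed. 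I would close by noting the final cycle (with $T'_N=T'$) inherits the property on every boundary side of $T'$, which is exactly the ``moreover'' statement.
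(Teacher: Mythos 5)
Your proposal is correct and is essentially the paper's proof: the paper peels off a square whose removal keeps $T$ connected and reroutes the cycle edge guaranteed on the neighboring block's facing side through the four new cells---exactly your leaf-addition detour $u_1\,\text{--}\,v_1\,\text{--}\,v_3\,\text{--}\,v_4\,\text{--}\,v_2\,\text{--}\,u_2$ with the same boundary-pairing invariant. One small inaccuracy: in spanning-tree order $q_j$ can be grid-adjacent to several earlier squares (e.g., four squares forming a $2\times 2$ arrangement), so the ``unique shared side'' claim is not needed and not true in general, but this is harmless because your final form of the invariant only requires that the side of $q_i$ facing $q_j$ be uncovered in $T_{j-1}$, which always holds.
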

To illustrate the ``moreover'' statement, Figure \ref{subfig:2x2_induct_condition} shows the edges that are required by the lemma statement.
\begin{figure}[htp]
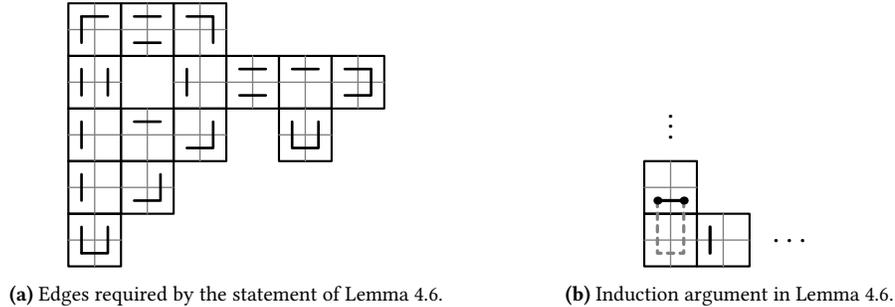

    \centering
    \begin{subfigure}{0.4\textwidth}
        \centering
        \asyinclude{asy/2x2_induct_condition.asy}
        \caption{Edges required by the statement of Lemma \ref{lem:hamiltonian_2x2}.}
        \label{subfig:2x2_induct_condition}
    \end{subfigure}
    \begin{subfigure}{0.4\textwidth}
        \centering
        \asyinclude{asy/2x2_induction.asy}
        \caption{Induction argument in Lemma \ref{lem:hamiltonian_2x2}.}
        \label{subfig:2x2_induction}
    \end{subfigure}
    \caption{The lemma for filling the space}
\end{figure}
\begin{proof}
    We do induction on $\lvert T\rvert$. The base case $\lvert T\rvert = 1$ is clear. 
    For the inductive step, let $s$ be a square in $T$ such that $T\setminus\{s\}$ is still connected. By induction hypothesis, take a Hamiltonian cycle satisfying the requirement of the lemma. Then, since $s$ is adjacent to another square $s'\in T$ and by induction hypothesis, that side of $s'$ must have the required edge. Thus, one can replace that required edge with a detour to the four squares in the subgrid of $s$ as in Figure \ref{subfig:2x2_induction}. This completes the proof.
\end{proof}

\section{Snake Cube Puzzles in \texorpdfstring{$\boldsymbol{1\times H\times W}$}{1xHxW} box}
\label{sec:2D_wildcard}
In this section, we consider the 2-dimensional variants of Snake Cube. We first consider \textsc{2D Snake Cube with Wildcards}, where we allow the wildcard \texttt * which could be used as either $\texttt S$ or $\texttt T$. We will prove the following:
\begin{theorem}
    \label{thm:2D_wildcard}
    \textsc{2D Snake Cube with Wildcards} is NP-hard.
\end{theorem}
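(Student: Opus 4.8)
The plan is to specialize the Numerical 3DM framework of Sections~\ref{sec:overview_3DM}--\ref{sec:wirepacking} to $D=1$, using the wildcard \texttt{*} to make the wiring and space-filling subchains completely flexible, so that the block gadgets are the only rigid part of the program. Given a Numerical 3DM instance, first apply Proposition~\ref{prop:3DM_tweak} to assume $a_i\in(0.5t,0.6t)$, $b_i\in(0.25t,0.3t)$, $c_i\in(0.125t,0.15t)$. Pick parameters $g=\Theta(n)$, $h=\Theta(n^2)$, $m=\Theta(n^3)$ with the hidden constants chosen so that $(n+1)g<h/40$, $4g<m$, $g/2\ge 100n$, and $m>nh+(n+1)g$, and set the box to $1\times(nh+(n+1)g)\times(mt+4g)$. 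The program is the concatenation of block gadgets $\langle A_1\rangle,\dots,\langle A_n\rangle,\langle B_1\rangle,\dots,\langle B_n\rangle,\langle C_1\rangle,\dots,\langle C_n\rangle$, separated and flanked by wiring gadgets $\langle W_j\rangle$, and followed by a filling gadget $\langle F\rangle$. Here each $\langle A_i\rangle$ (and similarly $\langle B_i\rangle,\langle C_i\rangle$) is a fixed \texttt{S}/\texttt{T} serpentine of $h$ parallel length-$ma_i$ segments joined by $h-1$ length-$2$ segments, so it folds into a $1\times h\times ma_i$ rectangle with its endpoints at a known pair of adjacent corners; each $\langle W_j\rangle$ is a run of \texttt{*}'s whose (even) length is the one prescribed by Lemma~\ref{lem:wire}, after accounting for the $2\times 2$-square-to-cube blow-up of Lemma~\ref{lem:hamiltonian_2x2}; and $\langle F\rangle$ is a run of \texttt{*}'s of length exactly $HW$ minus the total length of all the block and wiring gadgets.

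For the ``matching $\Rightarrow$ chain'' direction, given permutations $\sigma,\pi$ with $a_i+b_{\sigma(i)}+c_{\pi(i)}=t$, place $\langle A_i\rangle,\langle B_{\sigma(i)}\rangle,\langle C_{\pi(i)}\rangle$ left-to-right in horizontal band $i$, with gaps of $g$ between consecutive rectangles and to the boundary; this fits because the three widths sum to $mt$ and the box is $mt+4g$ wide and $nh+(n+1)g$ tall. Then Lemma~\ref{lem:wire}, applied to the underlying grid of $2\times 2$ squares, routes pairwise non-adjacent wires of the prescribed lengths connecting consecutive block endpoints and the entry/exit points, Lemma~\ref{lem:hamiltonian_2x2} converts these square-paths into cube-paths and fills the remaining region with a single Hamiltonian path, and the wildcards in $\langle W_j\rangle$ and $\langle F\rangle$ realize whatever \texttt{S}/\texttt{T} pattern these paths require; the ``coiling'' length adjustment of Lemma~\ref{lem:wire} makes all cube counts match exactly.

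For the ``chain $\Rightarrow$ matching'' direction, observe that since the block gadgets are rigid, any valid chain contains, for each $i$, exactly $h$ maximal straight segments of length $ma_i$ coming from $\langle A_i\rangle$ (and similarly $mb_i$ from $\langle B_i\rangle$, $mc_i$ from $\langle C_i\rangle$). As $ma_i\ge 0.5mt>m>H$, none of these $3nh$ segments can be vertical, hence all are horizontal; and because consecutive segments inside a block are linked by length-$2$ segments, their row indices form a $\pm1$ walk, so no two segments of one block are more than $h$ rows apart. Since $W=mt+4g<m(t+1)$ and $H=nh+(n+1)g<nh+h/40$, Lemma~\ref{lem:segment_packing} applies to these $3nh$ horizontal segments and yields the desired matching; the wildcard parts of the chain are irrelevant here.

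The geometric content is already packaged in Lemmas~\ref{lem:wire}, \ref{lem:segment_packing}, and \ref{lem:hamiltonian_2x2}, so the main work — and the main obstacle — is the simultaneous bookkeeping: choosing the hidden constants in $g,h,m$ and the exact lengths of each $\langle W_j\rangle$ and of $\langle F\rangle$ so that Lemma~\ref{lem:wire}'s hypotheses ($\min_i x_i>w_X/2$, $H'<w_X$, $g'\ge 100n$), Lemma~\ref{lem:segment_packing}'s hypotheses, and the exact total cube count $HW$ all hold at once, while keeping the whole instance polynomial-size. A secondary point of care is pinning down the block gadget and its interface with the $2\times 2$-square grid precisely enough that its $h$ segments have length exactly $ma_i$, that its endpoints land on the corner squares Lemma~\ref{lem:wire} expects, and that the region left for $\langle F\rangle$ is connected so Lemma~\ref{lem:hamiltonian_2x2} can be invoked.
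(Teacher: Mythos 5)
Your proposal follows essentially the same route as the paper: the same block gadgets, the same box dimensions and parameter regime, wildcard runs for the wiring and final filler, Lemma~\ref{lem:wire} on the $2\times 2$-square grid for the matching-to-chain direction, Lemma~\ref{lem:hamiltonian_2x2} for filling the leftover space, and Lemma~\ref{lem:segment_packing} applied to the $3nh$ forced horizontal segments for the chain-to-matching direction. The only differences are presentational (the paper fixes explicit constants such as $g=200n$, $h=20000n^2$, $m=30000n^3$ and handles the $A_n$-to-$B_1$ and $B_n$-to-$C_1$ cross-wires and the $2\times2$-square filling patterns explicitly), so your argument is correct in substance.
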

Section \ref{subsec:wildcard_setup}, \ref{subsec:2D_fill_to_3DM}, and \ref{subsec:3DM_to_2D_fill} will prove Theorem \ref{thm:2D_wildcard}.
Then, in Section \ref{subsec:2D_pack}, we will also outline an alternative proof of the following, which was first proved in \cite{original2Dsnake}.
\begin{theorem}
    \label{thm:2D_pack}
    \textsc{2D Snake Cube Packing} is NP-hard.
\end{theorem}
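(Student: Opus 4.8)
The plan is to reuse the Numerical 3DM framework of Sections~\ref{sec:overview_3DM}--\ref{sec:wirepacking} essentially verbatim, but exploit the extra freedom that \textsc{2D Snake Cube Packing} grants us—namely, that the chain need not fill the box—to \emph{eliminate the wildcard instructions} used in the proof of Theorem~\ref{thm:2D_wildcard}. In the wildcard construction, the wildcards are exactly what lets the wiring gadget route freely (choosing straight vs.\ turn as the wire detours around blocks); in the packing setting we instead make the wires out of fixed \texttt{S}/\texttt{T} instructions but give ourselves slack by leaving empty cells in the box. Concretely, I would (i)~keep the same box dimensions $1\times(nh+(n+1)g)\times(mt+4g)$ up to enlarging $W$ by a constant factor, (ii)~keep the block gadgets $\langle A_i\rangle,\langle B_i\rangle,\langle C_i\rangle$ unchanged, and (iii)~replace each wiring gadget by a fixed program that, rather than adapting its instruction sequence to the block positions, always realizes a ``staircase'' wire of a fixed length and fixed local shape, and show (via Lemma~\ref{lem:wire}, applied with the $2\times2$ square-to-cube blowup) that such a fixed-length wire can always be embedded in the available space between any two block placements, leaving the rest of the box empty.

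The two directions then go as follows. For ``matching $\Rightarrow$ packing'': given permutations $\sigma,\pi$ with $a_i+b_{\sigma(i)}+c_{\pi(i)}=t$, place the triples of blocks row-aligned exactly as in Figure~\ref{subfig:reduction_outline}, with gaps $g$; then invoke Lemma~\ref{lem:wire} (with $\ell_i=8nw_X$) to route the $n+1$ wires through the overall available space without collisions, and declare every remaining cell empty. Since we are packing, we do not need Lemma~\ref{lem:hamiltonian_2x2} at all—this is the simplification over \cite{original2Dsnake} and over the wildcard proof. For ``packing $\Rightarrow$ matching'': observe that the block gadgets still force $3nh$ horizontal segments of the prescribed lengths (the block gadget's alternating long/length-2 segment structure is rigid regardless of whether the ambient box is filled), and that the three hypotheses of Lemma~\ref{lem:segment_packing} still hold—the segment counts, the length constraints from Proposition~\ref{prop:3DM_tweak}, and the ``$\le h$ rows apart'' bound (which follows because a block gadget spans $\le h$ rows, exactly as before, using $H<nh+h/40$). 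Lemma~\ref{lem:segment_packing} then hands us the matching directly.

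The main obstacle is the ``packing $\Rightarrow$ matching'' rigidity argument: because empty cells are now allowed, one must re-verify that the adversary cannot exploit the slack to break the block gadget into pieces that each individually satisfy the program but collectively fail to produce $h$ segments of each required length. The key point to nail down is that the instruction sequence for $\langle A_i\rangle$ forces $h$ maximal straight runs of length exactly $ma_i$ \emph{as a property of the program string itself} (a length-$ma_i$ segment is the subchain between two consecutive \texttt{T}s separated by $ma_i-2$ \texttt{S}s), so these segments exist in \emph{any} legal folding, filled box or not; the ``no two same-type segments more than $h$ rows apart'' condition likewise needs only that consecutive segments of one block are joined by length-$2$ segments, forcing them into adjacent or near-adjacent rows. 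Once this is argued, the hypothesis set of Lemma~\ref{lem:segment_packing} is met and the theorem follows. A secondary, more routine obstacle is bookkeeping the $W$-enlargement so that $W<m(t+1)$ still holds after adding room for fixed-shape wires; choosing $m=\Theta(n^3)$ and $g=\Theta(n)$ as in the overview leaves ample margin.
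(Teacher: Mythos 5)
Your ``packing $\Rightarrow$ matching'' direction is fine and is in fact what the paper does: the block gadgets force the $3nh$ horizontal segments as a property of the program string, whether or not the box is filled, and Lemma~\ref{lem:segment_packing} applies verbatim (the paper simply reuses Section~\ref{subsec:2D_fill_to_3DM}). The genuine gap is in the wiring for ``matching $\Rightarrow$ packing''. Each wire's instruction string is fixed by the reduction before any matching is known, but the blocks it must connect (e.g.\ $B_j$ to $B_{j+1}$) land in bands determined by $\sigma$ and $\pi$, i.e.\ in an arbitrary vertical order; this is precisely why Lemma~\ref{lem:wire} outputs coarse routes whose shapes depend on the block placement. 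A wire of ``fixed length and fixed local shape'' (your staircase) cannot follow these instance-dependent routes: the only folding freedom a fixed \texttt{S}/\texttt{T} string has is the left/right choice at each \texttt{T}, and you never argue that this freedom suffices. The paper's solution is to make every wire a pure run of \texttt{T}'s; within the $2\times 2$-square blowup, turn-only subchains can emulate ``go straight'' and turns in either direction (Figure~\ref{fig:fill-pack}), so whatever coarse route Lemma~\ref{lem:wire} produces can actually be realized by the fixed program.

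Even granting flexible routing, you also omit the exact-length bookkeeping that the non-wildcard setting forces. The turn-only patterns do not consume exactly $4$ cubes per coarse square, so the cube count of a wire is not four times the square count delivered by Lemma~\ref{lem:wire}; the paper adjusts the count in steps of $4$ by swapping the patterns of Figures~\ref{fig:fill-pack-t1} and \ref{fig:fill-pack-t2} for those of Figures~\ref{fig:fill-pack-t3} and \ref{fig:fill-pack-t4}, and then a coloring of cubes by $(x\bmod 2,\,y\bmod 2)$ shows that any turn-only wire between the forced endpoints has length $\equiv 2\pmod 4$, which is why the wire lengths are set to $16nW+2$, $8nW+2$, $4nW+2$ rather than your $\ell_i=8nw_X$. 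Without an argument of this kind a fixed-length wire will generically miss its target endpoint by a few cubes, and since the chain must realize the program exactly, the construction fails. (Your proposed enlargement of $W$ is unnecessary and would also need care not to violate $W<m(t+1)$, which the converse direction relies on.)
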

\subsection{Setup}
\label{subsec:wildcard_setup}
Given an instance of Numerical 3DM $a_1,\dots,a_n$, $b_1,\dots,b_n$, $c_1,\dots,c_n$ with target sum $t$, and assuming the condition of Proposition \ref{prop:3DM_tweak} that $a_i\in (0.5t, 0.6t)$, $b_i\in (0.25t, 0.3t)$, and $c_i\in (0.125t, 0.15t)$ for all $i$, we set the following parameters to construct the input program of \textsc{2D Snake Cube with Wildcards}.
$$\setlength{\arraycolsep}{2pt}
\begin{array}{rclrcl}
g =& \text{gap width} &= 200n
\hspace{2cm} &
 H =& \text{height of the grid} &= nh + (n+1)g \\[2pt]
h =& \text{height of blocks} &= 20000n^2
& W =& \text{width of the grid} &= mt + 4g \\[2pt]
m =& \text{multiplier of widths} &= 30000n^3
& \ell_A =& \text{length of the  wires} &= 16nW \\[2pt]
 \ell_B =& \text{length of the  wires} &= 8nW 
& \ell_C =& \text{length of the  wires} &= 4nW
\end{array}$$

We will construct block gadgets $A_i$, $B_i$, and $C_i$ for all $1\leq i\leq n$. These blocks will be expected to fold into rectangles of size $h\times ma_i$, $h\times mb_i$, or $h\times mc_i$.
\begin{align*}
    \langle A_i\rangle &= 
    (\texttt{S})^{ma_i-1}
    (\texttt{TT}(\texttt S)^{ma_i-2})^{h-1}
    \texttt S\\
    \langle B_i\rangle  &= 
    (\texttt{S})^{mb_i-1}
    (\texttt{TT}(\texttt S)^{mb_i-2})^{h-1}
    \texttt S\\
    \langle C_i\rangle &= 
    (\texttt{S})^{mc_i-1}
    (\texttt{TT}(\texttt S)^{mc_i-2})^{h-1}
    \texttt S
\end{align*}

The program we will use is the following:
\begin{align*}
\mathcal P = &\langle A_1\rangle(\texttt *)^{\ell_A}\langle A_2\rangle(\texttt *)^{\ell_A}\dots
(\texttt *)^{\ell_A}\langle A_n\rangle(\texttt *)^{\ell_A} \\
&\quad \langle B_1\rangle(\texttt *)^{\ell_B}\langle B_2\rangle(\texttt *)^{\ell_B}\dots
(\texttt *)^{\ell_B}\langle B_n\rangle(\texttt *)^{\ell_B} \\
&\quad \langle C_1\rangle(\texttt *)^{\ell_C}\langle C_2\rangle(\texttt *)^{\ell_C}\dots
(\texttt *)^{\ell_C}\langle C_n\rangle(\texttt *)^{\ell_C}\langle\texttt{*}\rangle^\ell
\end{align*}
where the number of \texttt{*} at the end is to make the length of the whole string is exactly $WH$.

\subsection{Chain implies Matching}
\label{subsec:2D_fill_to_3DM}
Suppose that there is a chain that satisfies the program $\mathcal P$ above. Each block gadget forced cubes to form $h$ segments; all of these must be horizontal because otherwise, the length is too large: \[H = nh + (n+1)g = n(20000n^2) + (n+1)(200n) < 30000n^3 = m.\] It can be checked that
\begin{itemize}
    \item $H = nh + (n+1)g = nh + (n+1)(200n) < nh + 500n^2 = nh + \frac h{40}$ and
    \item $W = mt + 4g = mt + 800n < mt + 30000n^3 < m(t+1)$.
\end{itemize}
Moreover, straight paths in the same block must be in $h$ consecutive horizontal rows because the end of one segment must be next to the beginning of the other.
Therefore, by Lemma~\ref{lem:segment_packing}, there exists a matching for $(a_i)_{i=1}^n$, $(b_i)_{i=1}^n$, and $(c_i)_{i=1}^n$.
\subsection{Matching implies Chain}
\label{subsec:3DM_to_2D_fill}
Now we show that if there exists a matching in the instance of \textsc{Numerical 3DM}, i.e., permutations $\sigma$ and $\pi$ such that $a_i + b_{\sigma(i)} + c_{\pi(i)} = t$ for all $i$, then we can construct a chain that satisfies the program.

First, we note that a chain occupying a $h\times ma_i$ block can be made to satisfy the block gadget $\langle A_i\rangle$, as shown in Figure~\ref{subfig:block}. Represent $\langle A_i\rangle$ with that chain, and similarly, represent $\langle B_i\rangle$ and $\langle C_i\rangle$ with $h\times mb_i$ block and $h\times mc_i$ block.
We then place those blocks similar to Figure~\ref{subfig:reduction_outline}: divide the box into $n$ parts horizontally and putting block $\langle A_i\rangle$, $\langle B_{\sigma(i)}\rangle$, and $\langle C_{\pi(i)}\rangle$ on the $i$-th part. We also leave gaps exactly $g$ between any two adjacent blocks. Notice that since $a_i + b_{\sigma(i)} + c_{\pi(i)} = t$ for all $i$, the block placement allows leaving every gap exactly $g$. 

Then, we will apply the wire lemma to turn a string of $\texttt *$'s into a subchain connecting all these blocks. Divide the whole box into $2\times 2$ grid of cubes, and consider each $2\times 2$ grid as a $1\times 1$ square. The gaps between blocks have width of $g/2 = 100n$ squares, and the width of the grid equals $w_A = W/2$ squares for wires among $(\langle A_i\rangle)_{i=1}^n$.

For wires among $A_i$ blocks, Lemma~\ref{lem:wire} implies that there exists a sequence of $\ell_A = 8nw_A = 4nW$ squares of size $2\times 2$ that connects $A_i$ and $A_{i+1}$ for each $i$. Moreover, a subchain of cubes following a program of $16nW$ $\texttt *$ can fill the space regardless of the direction of how squares of size $2\times 2$ connect by filling each $2\times 2$ subgrid as shown in Figure~\ref{fig:fill-subgrid-wildcard} for straight squares, Figure~\ref{fig:fill-subgrid-wildcard-1}, and Figure~\ref{fig:fill-subgrid-wildcard-2} for both directions of a turn. 

For $B_i$ and $C_i$ blocks, we can set $w_X$ in the lemma to be $w_B = w_A/2$ and $w_C = w_A/4$, respectively due to the constraints $a_i\in (0.5t, 0.6t)$, $b_i\in (0.25t, 0.3t)$, and $c_i\in (0.125t, 0.15t)$ for all $i$. Thus, the wires between $B_i$ and $C_i$ blocks can be set to be $\ell_B = \ell_A/2$ and $\ell_C = \ell_A/4$, respectively, and be placed in the same way as wires between $A_i$ do.

For the wire between $A_n$ and $B_1$, they should be long enough to connect 
\begin{itemize}
    \item The last cubes of $A_n$ to the square $m_n$ for $A_i$ blocks in the notation of Lemma \ref{lem:wire}. This accounts for $\ell_A/2$ cubes.
    \item The square $m_n$ for $A_i$ blocks to the square $m_0$ for $B_i$ blocks. This accounts for at most $4W$ cubes.
    \item The square $m_0$ for $B_i$ blocks to the first tile in the $B_1$ block. This accounts for $\ell_B/2$ cubes.
\end{itemize} 
This uses $\ell_A/2+\ell_B/2+4W < \ell_A$ cubes. The extra length of cubes the space may be adjusted at the bottom space below the reserved space of $A_i$ blocks, whose area is at least $(g/2)(W/2) = gW/4 > \ell_A$ cubes. Thus, it is possible to do so.
Also, the wire between $B_n$ and $C_1$ can be placed similarly. This concludes that all blocks can be connected by wires, and there is a space of one $2\times 2$ square between wires in arbitrary arrangement.

\begin{figure}[htp]
    \centering
    \begin{subfigure}{0.25\textwidth}
        \centering
        \asyinclude{asy/subgrid_wildcard_s.asy}
        \caption{Filling ``straight" squares with wildcard}
        \label{fig:fill-subgrid-wildcard}
    \end{subfigure}
    \hspace{0.2cm}
    \begin{subfigure}{0.25\textwidth}
    \centering
    \asyinclude{asy/subgrid_wildcard_t1.asy}
        \caption{Filling ``turn" squares with wildcard, starting from the inner corner}
        \label{fig:fill-subgrid-wildcard-1}
    \end{subfigure}
    \hspace{0.2cm}
    \begin{subfigure}{0.25\textwidth}
    \centering
        \asyinclude{asy/subgrid_wildcard_t2.asy}
        \caption{Filling ``turn" squares with wildcard, starting from the outer corner}
        \label{fig:fill-subgrid-wildcard-2}
    \end{subfigure}
    \caption{Filling a $2\times 2$ square with \texttt{*} cubes.}
    \label{fig:fill-subgrid-wildcard-all}
\end{figure}

Lastly, we need to justify that the remaining space after placing all the wires and blocks can be completely filled. Notice that the construction we described so far is aligned with the $2\times 2$ polygrid, and it is connected because the blocks and wires are topologically equivalent to a path. Each piece is at least one cell apart from the others, so it cannot form a closed loop anywhere. By Lemma \ref{lem:hamiltonian_2x2}, there exists a Hamiltonian cycle that passes all the remaining squares. Thus, a string of only $\texttt{*}$ with length equal to the number of empty squares can be placed along the cycle and thus covers all the remaining space.

\subsection{Proof for Packing}
\label{subsec:2D_pack}
In this section, we will give an outline of a proof of Theorem \ref{thm:2D_pack}. The idea is mostly the same as \ref{thm:2D_wildcard}, but not allowing wildcards makes the proof different in a few places.

\paragraph{Setup.} Given an instance of Numerical 3DM $a_1,\dots,a_n$, $b_1,\dots,b_n$, $c_1,\dots,c_n$ with target sum $t$, we set up the exact same parameters $g$, $h$, $H$, $W$, $m$ as in Section \ref{subsec:wildcard_setup}. The parameters $\ell_A$, $\ell_B$, and $\ell_C$ are changed to the following to address the modulo $4$ issue, which is discussed at the end of this subsection:
\[\ell_A = 16nW + 2, \qquad \ell_B = 8nW + 2, \quad\text{and}\quad \ell_C = 4nW+2.\]
The block gadgets are also exactly the same. However, since we are not allowed to use $\texttt *$, we need to change our program to
\begin{align*}
\mathcal P ={} &\langle A_1\rangle(\texttt T)^{\ell_A}\langle A_2\rangle(\texttt T)^{\ell_A}\dots
(\texttt T)^{\ell_A}\langle A_n\rangle(\texttt T)^{\ell_A} \\
&\langle B_1\rangle(\texttt T)^{\ell_B}\langle B_2\rangle(\texttt T)^{\ell_B}\dots
(\texttt T)^{\ell_B}\langle B_n\rangle(\texttt T)^{\ell_B} \\
&\langle C_1\rangle(\texttt T)^{\ell_C}\langle C_2\rangle(\texttt T)^{\ell_C}\dots
(\texttt T)^{\ell_C}\langle C_n\rangle(\texttt T)^{\ell_C},
\end{align*}
where all $\texttt *$ are changed to $\texttt T$ and the $\langle \texttt *\rangle^\ell$ at the end is removed.

\paragraph{Chain $\boldsymbol\Rightarrow$ Matching.} This part is exactly the same as Section \ref{subsec:2D_fill_to_3DM} since it only uses block gadget.

\paragraph{Matching $\boldsymbol\Rightarrow$ Path Chain.} 
If there is a solution to an instance of Numerical 3DM, we will construct a chain satisfying $\mathcal P$. To that end, we use the exact same setup as Section \ref{subsec:3DM_to_2D_fill} by arranging the blocks according to the solution of Numerical 3DM as in Figure \ref{subfig:reduction_outline}. Like Section \ref{subsec:3DM_to_2D_fill}, we divide the grid into $2\times 2$ subgrids and ensure that all blocks fit within the subgrid. However, the major difference here is that we don't have the wildcard, so we need to wire every block using only ``Turn'' cubes. To fix this, we use the patterns in Figure \ref{fig:fill-pack} to fill it. Observe that only using patterns in Figure \ref{fig:fill-pack-t1} and Figure \ref{fig:fill-pack-t2} will make the average number of cubes per square strictly less than $4$, so the total number of cubes in a wire would be less than the specified wire length. Thus, we need to change patterns used at some turns from Figures \ref{fig:fill-pack-t1} and \ref{fig:fill-pack-t2} to \ref{fig:fill-pack-t3} and \ref{fig:fill-pack-t4}, respectively. This can adjust the number of cubes by $4$ at a time. Moreover, observe that using only patterns in Figure \ref{fig:fill-pack-t3} and Figure \ref{fig:fill-pack-t4} makes the average number of cubes per square strictly greater than $4$. This guarantees that the number of cubes used can be adjusted to be off by less than $4$ cubes from the exact length. 
\begin{figure}[htp]
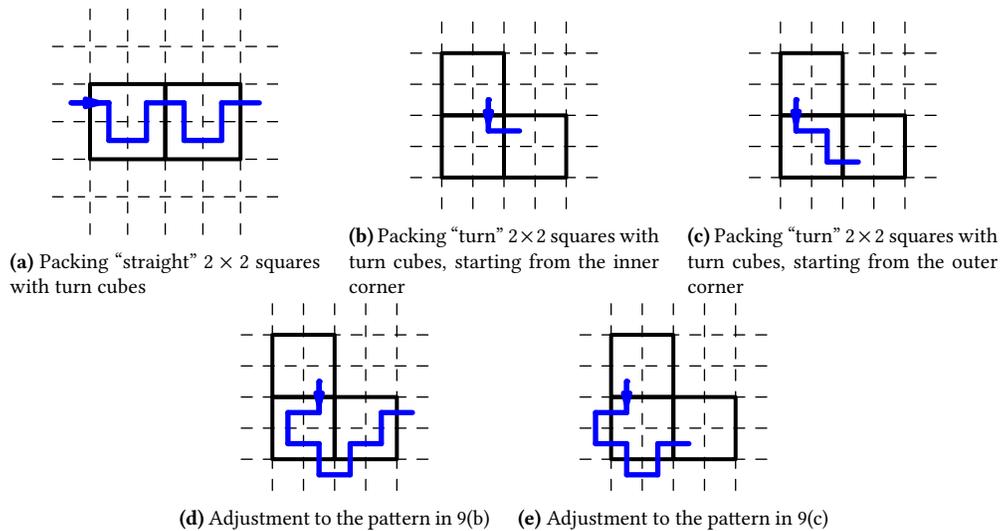

    \centering
    \begin{subfigure}{0.25\textwidth}
        \centering
        \asyinclude{asy/subgrid_pack_s.asy}
        \caption{Packing ``straight" $2\times 2$ squares with turn cubes}
        \label{fig:fill-pack-s}
    \end{subfigure}
    \hspace{0.2cm}
    \begin{subfigure}{0.25\textwidth}
    \centering
    \asyinclude{asy/subgrid_pack_t1.asy}
        \caption{Packing ``turn" $2\times 2$ squares with turn cubes, starting from the inner corner}
        \label{fig:fill-pack-t1}
    \end{subfigure}
    \hspace{0.2cm}
    \begin{subfigure}{0.25\textwidth}
    \centering
        \asyinclude{asy/subgrid_pack_t2.asy}
        \caption{Packing ``turn" $2\times 2$ squares with turn cubes, starting from the outer corner}
        \label{fig:fill-pack-t2}
    \end{subfigure}
    \\
    \begin{subfigure}{0.25\textwidth}
    \centering
        \asyinclude{asy/subgrid_pack_t3.asy}
        \caption{Adjustment to the pattern in \ref{fig:fill-pack-t1}}
        \label{fig:fill-pack-t3}
    \end{subfigure}
    \hspace{0.2cm}
    \begin{subfigure}{0.25\textwidth}
    \centering
        \asyinclude{asy/subgrid_pack_t4.asy}
        \caption{Adjustment to the pattern in \ref{fig:fill-pack-t2}}
        \label{fig:fill-pack-t4}
    \end{subfigure}
    \caption{Packing cubes in the wire aligned in $2\times 2$ subgrids}
    \label{fig:fill-pack}
\end{figure}

Finally, we need to argue that length adjustment by a multiple of $4$ (as in Figure \ref{fig:fill-pack}) is sufficient to get the exact length. To do that, we color each cube $(x,y)$ by the residues $(x\bmod 2, y\bmod 2)$. It can be shown that if the path uses the ``Turn'' instruction only, it will cycle through the four colors in a periodic fashion, making the number of cubes used invariant modulo $4$. Specifically, this number must be $2$ modulo $4$ because the width of the block is always even, forcing the wires to connect two cubes (namely, the cubes at the bottom left corner of one block and the top left corner of the other) with the same parity of $x$-coordinate but different parity of $y$-coordinate, and it is not hard to see that there exists a wire with length of $2$ modulo $4$ that follows ``Turn'' instruction only and connects two such cubes (ignoring crossing). Hence, all the wires with this property must have length $2$ modulo $4$, and length adjustments by multiples of $4$ are enough to adjust the length of the wire to any target.

Note that we did not require filling the remaining space.

\section{Snake Cube Puzzles in \texorpdfstring{$\boldsymbol{2\times H\times W}$}{2xHxW} box}
\label{sec:2xmxn_fill}

In this section, $2 \times H \times W$ \textsc{Snake Cube} is NP-complete with a reduction from Numerical 3-dimensional matching. 

\begin{theorem}
    \label{thm:2xmxn}
    $2 \times H \times W$ \textsc{Snake Cube} is NP-hard, and therefore, NP-complete.
\end{theorem}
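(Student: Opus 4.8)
The plan is to reduce from \textsc{Numerical 3DM} in the normalized form of Proposition~\ref{prop:3DM_tweak}, re-using the block gadgets, the Wire Lemma (Lemma~\ref{lem:wire}), and the Hamiltonicity lemma (Lemma~\ref{lem:hamiltonian_2x2}) of Sections~\ref{sec:overview_3DM}--\ref{sec:wirepacking}, but now in a box of depth $D=2$. Think of the $2\times H\times W$ box as two stacked copies of an $H\times W$ grid, a \emph{top} and a \emph{bottom} layer, with $H$ and $W$ of the same orders as in Section~\ref{subsec:wildcard_setup} (up to small additive corrections for parity and for an auxiliary ``shelf''). The one genuinely new ingredient is the shape of a block: one natural design is to build $\langle A_i\rangle$ (and likewise $\langle B_i\rangle,\langle C_i\rangle$) so that in an intended solution it folds into a solid $2\times h\times m x_i$ \emph{cuboid}---the subchain sweeps a straight run along one layer, drops to the other layer, returns, and repeats, producing $h$ ``double-beams''; this is essentially the Section~\ref{subsec:wildcard_setup} gadget with the number of straight runs doubled, and it is what makes depth exactly $2$ (rather than $8$) suffice. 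Since the program must have length exactly $2HW$, the wiring between consecutive blocks and the remaining space are realized using \texttt S and \texttt T only (no wildcards); the key point is that a depth-$2$ region has enough room to fill straight and turning ``tube'' segments and to tune a wire to its exact prescribed length by detouring between the two layers, so the modulo-$4$ length bookkeeping forced in the planar packing reduction of Section~\ref{subsec:2D_pack} is no longer needed.

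\textbf{Matching $\Rightarrow$ chain.} Given permutations $\sigma,\pi$ with $a_i+b_{\sigma(i)}+c_{\pi(i)}=t$, place the $3n$ cuboid blocks one matched triple per horizontal band with gaps exactly $g$, filling both layers, exactly as in Figure~\ref{subfig:reduction_outline}. The complement of the blocks in the box---the gap cells together with the shelf region---is a single connected region that, after an appropriate partition into $2\times 2$ groups of cubes (one factor of $2$ coming from the depth of the box), lies in the setting of Lemma~\ref{lem:wire}; that lemma routes a skeleton path joining consecutive block endpoints in the correct order, and since these wires must go around the shelf we invoke it with $\ell_i=12 n w_X$, as anticipated in the remark after Lemma~\ref{lem:wire}. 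Lemma~\ref{lem:hamiltonian_2x2} then extends the skeleton to a Hamiltonian path visiting every remaining cube; the explicit all-turn-or-straight filling tiles for depth-$2$ tube segments, together with the length freedom above, realize each wire at its exact prescribed length.

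\textbf{Chain $\Rightarrow$ matching.} This is the main obstacle, and---unlike Sections~\ref{sec:2D_wildcard} and~\ref{sec:triangular}---we do not invoke the Segment Packing Lemma directly but give a self-contained rigidity argument exploiting that the box is only $2$ cubes tall and is \emph{exactly} filled. Given any chain, each block gadget forces $2h$ straight runs of its prescribed length, and since $m\gg H$ these runs are all horizontal. Projecting both layers onto one $H\times W$ grid and using that an $A$-, $B$-, $C$-run has length $>0.5W$, $>0.25W$, $>0.125W$ \emph{per layer}, a given projected row can contain only boundedly many runs of each kind; re-running the good-row / bad-row count of Claim~\ref{claim:few_bad_rows} with all the relevant quantities scaled by the factor $2$ from the second layer ($2nh$ runs of each kind, $H<nh+\tfrac h{40}$ projected rows) again yields fewer than $h$ bad rows, hence a residue class modulo $h$ consisting entirely of good rows and met by exactly $n$ blocks of each type; counting run lengths there gives $\sum(a_i+b_j+c_k)\le nt$ with left side $nt$, forcing $a_i+b_j+c_k=t$ throughout, i.e.\ a matching. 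The delicate steps are (i)~checking that the doubled bad-row count still stays below $h$, and (ii)~using the exact-fill hypothesis together with the shelf to forbid a block's $2h$ forced runs from dispersing over too many projected rows, or from being split between the layers in a way that breaks the count---this is exactly why the shelf is introduced.

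I expect the reverse direction to be the bottleneck: defining the shelf precisely and proving that exact filling of a height-$2$ box confines each block's runs to (essentially) $h$ consecutive projected rows, then re-deriving the analogue of Claim~\ref{claim:few_bad_rows}. By contrast the forward direction, the choice of $g,h,m,H,W,\ell_A,\ell_B,\ell_C$, and the catalogue of depth-$2$ filling tiles are routine given Lemmas~\ref{lem:wire} and~\ref{lem:hamiltonian_2x2} and the arithmetic already done in Section~\ref{sec:2D_wildcard}.
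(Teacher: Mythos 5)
Your overall plan coincides with the paper's: the same $2\times h\times mx_i$ cuboid block gadgets (with $2h$ straight runs), the Wire Lemma invoked with $\ell_i=12nw_X$ to detour around a shelf, Lemma~\ref{lem:hamiltonian_2x2} to fill the leftover space, and a direct good-row/bad-row count (rather than a literal invocation of Lemma~\ref{lem:segment_packing}) for the reverse direction. Two mechanical points differ only cosmetically: the paper's wires and filler consist of \texttt T's alone, made flexible by the $2\times2\times2$ traversal fact that $(\texttt T)^8$ can enter a $2\times2\times2$ block through one face and leave through any other (Lemma~\ref{lem:zigzag}), which is the precise form of your informal ``depth-$2$ tube'' claim; and the paper's counting is done separately inside each shelf region (showing each region holds exactly one $A$-, $B$-, $C$-block and contains a good $1\times1\times(W-6)$ row), rather than by projecting the two layers and reusing the mod-$h$ residue-class trick, which would additionally require the ``no two same-type segments more than $h$ rows apart'' hypothesis.

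The genuine gap is the shelf itself. You invoke it repeatedly---to confine each block's $2h$ runs to a height-roughly-$h$ region, to prevent runs from being split between layers in a way that ruins the count, and as an obstacle the wires must route around---but you never construct it, and you explicitly defer ``defining the shelf precisely and proving that exact filling of a height-$2$ box confines each block's runs.'' That deferred step is the central new content of the paper's proof: Section~\ref{subsec:shelf_cons} writes down an explicit instruction sequence $\langle\text{shelf}\rangle$ (an outer frame, $n$ top-layer and $n$ bottom-layer shelf frames, plus interior filling), and argues its folding is forced because every run of $\Theta(W)$ consecutive \texttt S's must lie along the length of the box, pinning the whole frame, after which the chain is forced to fill each shelf's interior before exiting. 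Without this gadget the reduction's program cannot even be written down, and the reverse direction's key confinement claim (each block gadget lies inside a single shelf region, using $m>g$ and $m>H$) has no basis; note also that exact filling alone does not substitute for it, since without the shelf a block's runs could straddle what should be two bands. So the proposal is the right strategy but is missing the construction and rigidity proof of its load-bearing gadget.
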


The rest of this section is devoted to proving Theorem \ref{thm:2xmxn}. 

\subsection{Reduction Overview}

We follow the block and wire reduction infrastructure introduced in Section \ref{sec:overview_3DM} and used in Section \ref{sec:2D_wildcard}. 

Given an instance of Numerical 3DM $a_1,\dots,a_n$, $b_1,\dots,b_n$, $c_1,\dots,c_n$ with target sum $t$, and assuming the condition of Proposition \ref{prop:3DM_tweak} that $a_i\in (0.5t, 0.6t)$, $b_i\in (0.25t, 0.3t)$, and $c_i\in (0.125t, 0.15t)$ for all $i$, we set the following parameters to construct the input program of \textsc{$2\times H\times W$ Snake Cube}.

$$\setlength{\arraycolsep}{2pt}
\begin{array}{rclrcl}
g =& \text{gap width} &= 1200n,
\hspace{2cm} &
 H =& \text{height of the box} &= n(h + 6g + 4) + 2, \\[2pt]
h =& \text{height of blocks} &= 60000n^2,
& W =& \text{width of the box} &= 4g + mt + 6, \\[2pt]
m =& \text{multiplier of widths} &= 70000n^3,
& \ell_A =& \text{length of the  wires} &= 96nW,\\[2pt]
 \ell_B =& \text{length of the  wires} &= 48nW,
& \ell_C =& \text{length of the  wires} &= 24nW.
\end{array}$$

In three dimensions, each block gadgets $\langle A_i\rangle $, $\langle B_i\rangle$, and $\langle C_i\rangle$ will correspond to block of size $2 \times h \times ma_i$, $2 \times h \times mb_i$, and $2 \times h \times mc_i$. A second layer in the third dimension provides us the freedom to fill (not just pack) the space using a long sequence of \texttt T's, which we were unable to accomplish in 2D. However, the second layer also brings a major obstacle to constructing block gadgets. The naive block gadgets similar to Section \ref{sec:2D_wildcard} could inadvertently swap between layers anytime, causing an unequal distribution of rows corresponding $a_i$, $b_i$, and $c_i$ between layers that causes Lemma \ref{lem:segment_packing} to fail. To get around this issue, we will introduce a new gadget called a \vocab{shelf}. A shelf is a well-chosen set of instructions that divides the box into $n$ regions, each of height roughly $h$. The shelf will effectively constrain how and where these blocks fold so that each block will have height approximately $h$ in each layer. The exact description will be provided in Section \ref{subsec:shelf_cons}.

Here, we provide an overview of the construction. We start our program with a sequence of instructions, denoted $\langle\text{shelf}\rangle$, that forces the construction of the \emph{shelf}, which will be provided in Section \ref{subsec:shelf_cons}. As described in Section \ref{sec:overview_3DM}, the core of the program will consist of block gadgets
\begin{align*}
    \langle A_i\rangle &= 
    (\texttt{S})^{ma_i-1}
    (\texttt{TT}(\texttt S)^{ma_i-2})^{2h-1}
    \texttt S,\\
    \langle B_i\rangle  &= 
    (\texttt{S})^{mb_i-1}
    (\texttt{TT}(\texttt S)^{mb_i-2})^{2h-1}
    \texttt S,\\
    \langle C_i\rangle &= 
    (\texttt{S})^{mc_i-1}
    (\texttt{TT}(\texttt S)^{mc_i-2})^{2h-1}
    \texttt S.
\end{align*}
Block $\langle A_i\rangle$, $\langle B_i\rangle$, and $\langle C_i\rangle$ are expected to be represented by a chain that occupies blocks of size $2\times h\times ma_i$, $2\times h\times mb_i$, and $2\times h\times mc_i$ respectively, as shown in Figure \ref{fig:2xmxn_block}.
\begin{figure}[htp]
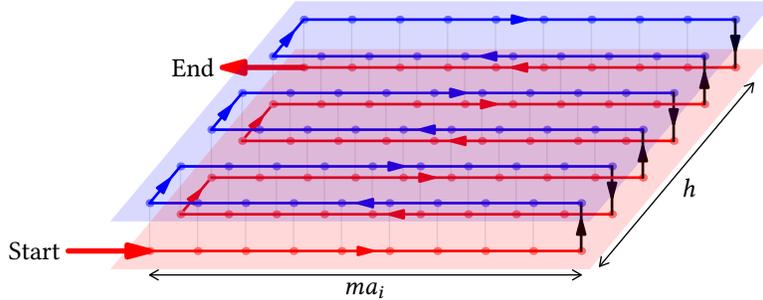

    \centering
    \asyinclude{asy/2xmxn_block.asy}
    \caption{Illustration of chain satisfying block gadget $\langle A_i\rangle$ in $2\times H\times W$ \textsc{Snake Cube}. The colors red and blue represent different layers of cubes.}
    \label{fig:2xmxn_block}
\end{figure}

The wiring gadget consists of a long sequence of $\texttt T$'s. Lastly, we end with a long sequence of turns $(\texttt T)^\ell$ that provides just enough cubes to fill the remaining space if the instance \textsc{Numerical 3DM} is solvable. 

The program is then the following.
\begin{align*}
\mathcal P = \langle\text{shelf}\rangle&\langle A_1\rangle(\texttt T)^{\ell_A}\langle A_2\rangle(\texttt T)^{\ell_A}\dots
(\texttt T)^{\ell_A}\langle A_n\rangle(\texttt T)^{\ell_A} \\
&\quad \langle B_1\rangle(\texttt T)^{\ell_B}\langle B_2\rangle(\texttt T)^{\ell_B}\dots
(\texttt T)^{\ell_B}\langle B_n\rangle(\texttt T)^{\ell_B} \\
&\quad \langle C_1\rangle(\texttt T)^{\ell_C}\langle C_2\rangle(\texttt T)^{\ell_C}\dots
(\texttt T)^{\ell_C}\langle C_n\rangle(\texttt T)^{\ell_C}(\texttt T)^\ell,
\end{align*}
where $\ell$ is such that the entire length of the program is $2HW$.

\subsection{``Shelf'' Construction}
\label{subsec:shelf_cons}

We begin by specifying the set of instructions that will force the formulation of a shelf with $n$ rows, shown below.

\begin{figure}[htp]
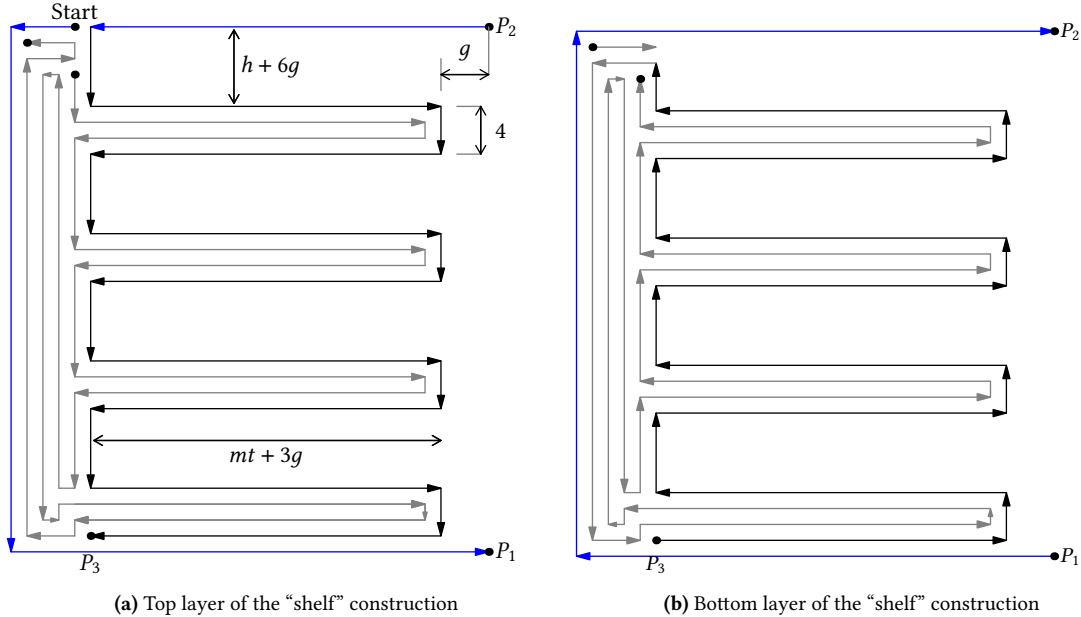

\centering
\begin{subfigure}{0.45\textwidth}
\asyinclude{asy/shelf_top.asy}
\caption{Top layer of the ``shelf'' construction}
\label{fig:topshelf}
\end{subfigure}
\begin{subfigure}{0.45\textwidth}
\asyinclude{asy/shelf_bottom.asy}
\caption{Bottom layer of the ``shelf'' construction}
\label{fig:bottomshelf}
\end{subfigure}
\caption{The ``shelf'' configuration highlighted in \ref{subsec:shelf_cons}. Whenever there is a discontinuity, the chain moves to the other layer.}
\label{fig:shelf}
\end{figure}


We first trace out the blue \emph{frame} of the shelf (shown in Figure \ref{fig:shelf}) traversing from the starting point labeled ``Start'' to $P_1$ in the top layer, $P_1$ to $P_2$ in the bottom layer, and finishing the border in the top layer. 

\[ \langle\text{outer frame}\rangle = \texttt{SSSST}\texttt S^{H-2}\texttt{T}\texttt S^{W-2}\texttt{TT}\texttt S^{W-2}\texttt{T}\texttt S^{H-2}\texttt{T}\texttt S^{W-2}\texttt{TT}\texttt S^{W-6}. \]

We now trace out the black frame $n$ shelves by following the path in black. Each shelf can be traced with the instructions 
$$\langle\text{top shelf frame}\rangle = \texttt{ T}\texttt S^{h+6g} \texttt{T}\texttt S^{W-g-6}\texttt{TSST}\texttt S^{W-g-6}.$$
We repeat this $n$ times to trace out the shelves in the top layer. At point $P_3$, we switch to the bottom layer and trace the $n$ shelves for the bottom layer as well.
The tracing instructions for the bottom layer is the reverse of the top one:
$$\langle\text{bottom shelf frame}\rangle = (\texttt S)^{W-g-6} \texttt{TSST}\texttt S^{W-g-6} \texttt{T}\texttt S^{h + 6g} \texttt{T}.$$
Then, the entire structure of the frame is
\[ 
\langle \text{shelf frame}\rangle
= \langle \text{outer frame}\rangle
\left( \langle\text{top shelf frame}\rangle \right)^{n} \texttt{TT} \left( \langle\text{bottom shelf frame}\rangle \right)^{n-1} (\texttt S)^{W-g-6} \texttt{TSST}\texttt S^{W-g-6} \texttt{T}\texttt S^{h-2},\]
where the ending instructions trace out the last shelf in the bottom layer. This completes the frame of the shelf. It is not hard to see the construction of the frame is unique. (Notably, whenever there is a continuous sequence of $\Theta(W)$ straights, the corresponding cubes must be aligned in a line along the length of the box, constraining the orientation and placement of all cubes along the frame to the unique representation.)

We now fill the inside of the shelf (shown in gray in Figure \ref{fig:shelf}). If we follow the instructions above with $\texttt{TSST}\texttt S^{H-5}$, the chain must turn into the shelf as there is no other way to place the straight line of $H-4$ cubes otherwise. Now, observe that there is only one exit to the inside of the shelf construction. If the snake cube box is to be filled, then the inside of the shelf must be completely filled before the chain of cubes exits the shelf; the exact filling instructions is not important. One possible way is to start filling the bottom layer, move up, and then fill the top layer.
\begin{align*}
&\langle{\text{bottom filling}}\rangle \\
&\quad=
\texttt{TSST}(\texttt S)^{H-5}\texttt{TSSTT}(\texttt S)^{W-g-6}\texttt{TT}(\texttt{S})^{W-g-5}\texttt{TTT}(\texttt{S})^{H-7}\texttt{TT}(\texttt  S)^{H-9}\texttt{TT}\left((\texttt S)^{h+6g+1}\texttt{T}(\texttt{S})^{W-g-6}\texttt{TT}(\texttt{S})^{W-g-6}\texttt{TS} \right)^{n-1}\texttt{S},
\\
&\langle{\text{top filling}}\rangle \\
&\quad =\left(\texttt{ ST}\texttt S^{W-g-6} \texttt{TT}\texttt S^{W-g-6}\texttt{T}\texttt S^{h+6g+1}\right)^{n-1}\texttt{TT}\texttt S^{H-9}\texttt{TT}\texttt S^{H-7}\texttt{TTT}\texttt S^{W-g-5}\texttt{TT}\texttt S^{W-g-6}\texttt{TTSST}\texttt S^{H-5} \texttt{TSSTTSS}.
\end{align*}

The above instructions can form the gray subchain shown in Figure \ref{fig:shelf}. The final construction of the shelf is then
$$\langle\text{shelf}\rangle
= \langle \text{shelf frame}\rangle \langle\text{top filling}\rangle \texttt{TT} \langle\text{bottom filling}\rangle \texttt{TTSSSS}.$$

\subsection{Chain implies Matching }

We now show that a solution to \textsc{$2\times H\times W$ Snake Cube} implies a matching for \textsc{Numerical 3DM}. This time, we do not need Lemma \ref{lem:segment_packing} because the shelf already significantly constrained the construction. We simply argue directly.

Let $X_i$ denote either $A_i$, $B_i$, or $C_i$. Since $m > H$, each of the long straight sequences of length $mx_i - 2$ in $\langle X_i \rangle$ must form a continuous segment of $mx_i$ cubes lying in the horizontal direction. Additionally, since $m > g$, all cubes within a $\langle X_i \rangle$ block gadget must lie entirely within one row of the shelf. 

As a result, all of $2h$ segments of length $mx_i$ in each block must entirely lie in the volume $2\times (h + 6g)\times (W-6)$ of one row of the shelf. We claim that each row of shelf must contain exactly one $A_i$, $B_i$, and $C_i$ block. To see this, observe that it is impossible to have two $A_i$ blocks in the same row of the shelf; otherwise, if $A_i$ and $A_j$ lie in the same row of the shelf, there are $2\cdot 2h$ segments of length at least $m\min\{a_i,a_j\} \geq 0.5m(t+1) > W/2$. Since $4h > 2(h+6g)$, there must be two segments in the same row. Their combined length is at least $2m\min\{a_i,a_j\} > W$, yielding a contradiction. Thus, each row must contain exactly one $A_i$ block. Assuming this result, we can also show that each row must contain exactly one $B_i$ block and one $C_i$ block as desired.

Now consider each $X_i$ block as $2h$ straight segment of size $1\times mx_i$. We claim that in a row of the shelf that contains block $A_i$, $B_j$, and $C_k$ there exists a $1\times 1\times (W-6)$ row that contains three segments of lengths $ma_i$, $mb_j$, and $mc_k$. To prove that this row exists, we can use a similar argument to Lemma \ref{lem:segment_packing}. Define $n_A, n_B,n_C, n_{\text{good}}$ as same as in the proof of Lemma \ref{lem:segment_packing} but counting only rows in between shelf we are considering (but summing across both layers), which has size $2\times (h + 6g)\times (W-6)$. Thus, $n_A + n_B + n_C + n_{\text{good}} = 2(h+6g)$. However, by counting the number of A-segments, B-segments, and C-segments, we get, respectively, that
\begin{align*}
    n_A &= 2(h+6g) - 2h = 12g \\
    2h \leq 3n_A + n_C + n_{\text{good}} \implies n_B &\leq 2n_A + 2(h+6g) - 2h \leq 36g\\
    2h \leq 7n_A + 3n_B + n_{\text{good}} \implies n_C &\leq 6n_A + 2n_B + 2(h+6g) - 2h = 156g
\end{align*}
Since $2(h+6g) > 12g + 36g + 156g$ for all $n>1$, there must be at least one good row as desired.

Consider the row that contains all segments of length $ma_i$, $mb_j$, and $mc_k$, we must have 
\begin{align*}
    ma_i + mb_j + mc_k &\leq W = mt + 4g + 6 < m(t+1),\\
    a_i + b_j + c_k &\leq t.
\end{align*}

Summing the inequality above over all rows, we get $nt = \sum_i a_i + \sum_j b_j + \sum_k c_k \leq nt$, so therefore the equality $a_i + b_j + c_k = t$ must hold for all rows of the shelf. This implies a path filling necessarily implies the existence of matching.

\subsection{Matching implies Chain }

Now we show that if there exists a matching in \textsc{Numerical 3DM}, i.e., permutations $\sigma$ and $\pi$ such that $a_i + b_{\sigma(i)} + c_{\pi(i)} = t$ for all $i$, then we can construct a chain that satisfies the program.

First, we note that a chain occupying a $2\times h\times ma_i$ block can be made to satisfy the block gadget $\langle A_i\rangle$. Note that we will need the starting and ending points to be on different rows, which can be done by letting the chain alternate between the top and bottom layers every two rows, as shown in Figure \ref{fig:2xmxn_block}. Represent $\langle A_i\rangle$ with that chain, and similarly, represent $\langle B_i\rangle$ and $\langle C_i\rangle$ with $2\times h\times mb_i$ block and $2\times h\times mc_i$ block.
We then place those blocks similar to Figure \ref{subfig:reduction_outline}: divide the box into $n$ parts horizontally and putting block $\langle A_i\rangle$, $\langle B_{\sigma(i)}\rangle$, and $\langle C_{\pi(i)}\rangle$ on the $i$-th part. We also leave gaps of exactly $g$ between any two adjacent blocks and gaps of $3g$ between each block and the shelf. Notice that since $a_i + b_{\sigma(i)} + c_{\pi(i)} = t$ for all $i$, the block placement allows for leaving gaps of exactly $g$ between adjacent blocks. 

To connect blocks together, we observe the following property of traversing with a long sequence of turns in 3-dimensional space. 

\begin{lemma}[$2 \times 2 \times 2$ Traversal]
    Given the instructions $(\texttt T)^8$ and a $2 \times 2 \times 2$ cube, if the subchain following these instructions enters the cube from one face, it may exit the cube from any other face. 
    \label{lem:zigzag}
\end{lemma}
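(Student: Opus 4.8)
The plan is to treat the $2\times2\times2$ cube as eight unit cells and exhibit, for each ordered pair of distinct faces $(F_{\text{in}}, F_{\text{out}})$, an explicit Hamiltonian path on these eight cells that uses exactly $8$ cubes (so it visits every cell once), follows the instruction string $(\texttt T)^8$ — i.e.\ every interior instruction is a turn — and whose first edge crosses $F_{\text{in}}$ and whose last edge crosses $F_{\text{out}}$. Since the chain must enter through $F_{\text{in}}$, the first cell is one of the four cells on that face; by symmetry of the cube we may fix $F_{\text{in}}$ to be, say, the bottom face, and then we only need to handle the five target faces $F_{\text{out}}\in\{\text{top},\text{north},\text{south},\text{east},\text{west}\}$, up to the obvious rotational symmetry fixing the bottom face (which further collapses the four side faces into one case and leaves essentially two cases: $F_{\text{out}}$ opposite $F_{\text{in}}$, and $F_{\text{out}}$ adjacent to $F_{\text{in}}$).

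First I would set coordinates: label the cells $(x,y,z)$ with $x,y,z\in\{0,1\}$, let $F_{\text{in}}$ be the face $z=0$ (so the path starts by going upward in $z$ into the cube), and check the ``all turns'' condition as a purely local condition: consecutive steps of the path must alternate coordinate directions, never repeating the same axis twice in a row. For $F_{\text{out}}$ the top face ($z=1$): the path must also end with a $z$-step, so I would take a path such as $(0,0,0)\to(0,0,1)\to(0,1,1)\to(0,1,0)\to(1,1,0)\to(1,0,0)\to(1,0,1)\to(1,1,1)$, whose step axes are $z,y,z,x,y,z,\ldots$ — wait, that has two $z$-steps only separated correctly; I would simply verify each candidate's step-axis sequence has no immediate repeat and that it is a genuine Hamiltonian path, adjusting by symmetry until one works. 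For $F_{\text{out}}$ a side face, say $x=1$ going outward in $x$: the path starts with a $z$-step and must end with an $x$-step, and again I exhibit one concrete Hamiltonian path meeting the alternation constraint. Because there are only $8!$ orderings and the alternation constraint is very restrictive, a short case check (or appeal to a figure) suffices; I would present the two representative paths explicitly and note the remaining cases follow by rotating the cube about the axis through the centers of $F_{\text{in}}$ and its opposite face, or by reflecting.

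The one genuine subtlety — and the step I expect to be the main obstacle — is the \emph{parity/alternation} obstruction: a path through $2\times 2\times 2$ that turns at every step must alternate among the three axes, and one has to make sure such a path can both start with the prescribed entering step and finish with the prescribed exiting step while still covering all $8$ cells. Concretely, the sequence of $7$ step-axes is a word over $\{x,y,z\}$ of length $7$ with no two equal consecutive letters, the first letter is forced to be $z$ (entering through the bottom), and the last letter is forced by $F_{\text{out}}$; I need to confirm that for each allowed last letter there is such a word that is actually \emph{realizable} as a self-avoiding walk filling the cube. This is where I would lean on an explicit construction (and Figure-style verification) rather than an abstract argument, since the realizability is not automatic from the word alone. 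Once the two representative Hamiltonian paths are written down and checked, the lemma follows, and the $(\texttt T)^8$ segment can be freely routed between any two faces of a $2\times2\times2$ block as needed by the wiring construction.
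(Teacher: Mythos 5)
Your overall plan---reduce by symmetry and exhibit explicit Hamiltonian paths on the eight cells whose consecutive steps never repeat an axis---is in the same spirit as the paper, which gives no from-scratch argument at all: it simply cites Abel et al.\ and reproduces their figures showing all configurations of a subchain entering one face and exiting another. Your observation that the real content is \emph{realizability} (an axis word with no immediate repeats must actually be traceable as a self-avoiding walk covering all eight cells) is exactly the point those figures settle, so an explicit coordinate check is a legitimate replacement.

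However, there is a concrete error in how you impose the boundary conditions, and it would invalidate the paths you construct. All eight cubes of the $(\texttt T)^8$ block carry a \texttt T instruction, including the first and last; the instruction at the first cube constrains the angle formed by the \emph{entering} edge and the first internal edge to be $90^\circ$. So if the chain enters through the bottom face, the entering edge is a $z$-step and the first internal step is forced to be \emph{not} $z$---the opposite of your claim that ``the first letter is forced to be $z$.'' Your sample path $(0,0,0)\to(0,0,1)\to\cdots$ fails for precisely this reason: two consecutive $z$-steps at the entry cube would require an \texttt S there. The same correction applies at the exit: the last internal step must be perpendicular to the normal of $F_{\text{out}}$. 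With the corrected constraints the lemma still holds; e.g., entering at $(0,0,0)$ through $z=0$, the path $(0,0,0),(1,0,0),(1,1,0),(0,1,0),(0,1,1),(1,1,1),(1,0,1),(0,0,1)$ exits through the top, and $(0,0,0),(0,1,0),(0,1,1),(0,0,1),(1,0,1),(1,0,0),(1,1,0),(1,1,1)$ exits through $x=1$. One further small point: once you fix the entry \emph{cell} (not just the entry face), the four side faces fall into two symmetry classes (those containing the entry cell and those not), so you have three representative cases rather than two.
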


\begin{proof}
    This result is shown in Abel et. al.'s paper \cite{abel2013finding}. We include the images from their paper in Figure \ref{fig:abelcubecombo} to provide a visual proof of our claim above. 
    \begin{figure}[htp]
    \centering
        \includegraphics[width=2.5cm]{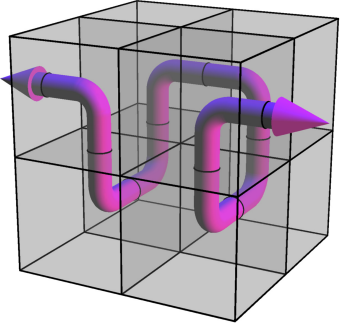}
        \hspace{0.3cm}
        \includegraphics[width=2.5cm]{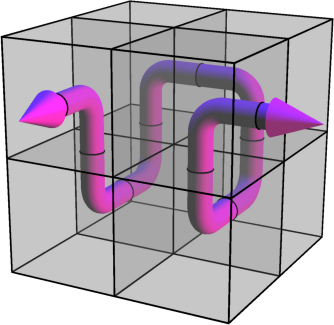}
        \hspace{0.3cm}
        \includegraphics[width=2.5cm]{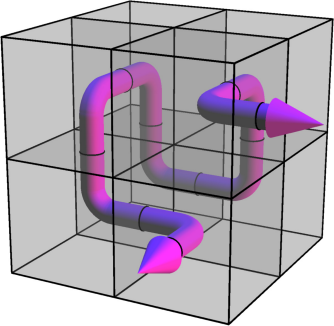}
        
        \includegraphics[width=2.5cm]{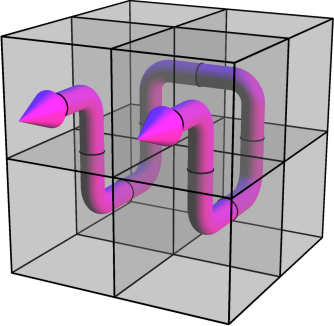}
        \hspace{0.3cm}
        \includegraphics[width=2.5cm]{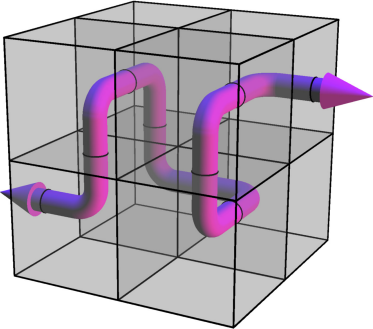}
        \hspace{0.3cm}
        \includegraphics[width=2.5cm]{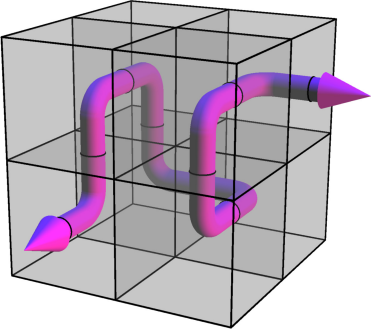}
        \caption{All possible configurations of a subchain entering from one face and exiting through another face, as shown in \cite[Figure 6,7]{abel2013finding}.}
        \label{fig:abelcubecombo}
    \end{figure}
\end{proof}

Since $h$ and $m$ are multiples of $4$, the remaining unfilled space in the box can be partitioned into $2 \times 4 \times 4$ blocks of space. From here on, we view the remaining space as a smaller 2D grid where each cell corresponds to a $2 \times 2 \times 2$ cube in the original box. The unfilled space in the box is then a connected $ 2 \times 2$ polygrid. Traversing between $2 \times 2 \times 2$ blocks of space with $(\texttt T)^8$'s gives the same movement freedom as traversing between cells in a 2D grid given a wildcard. Therefore, the remaining proof closely follows that in Section \ref{sec:2D_wildcard}. However, the shelf poses an obstacle, and wires must detour around the shelf.

We define $W^* = \frac{W}{2}$, $g^* = g/2 = 600n$, $\ell_A^* = \ell_A/8 = 24nW^*$, $\ell_B^* = \ell_B/8 = 12nW^*$, and $\ell_C^* = \ell_C/8 = 6nW^*$ which are the equivalent quantities of $W$, $g$, $\ell_A$, $\ell_B$, and $\ell_C$ in the new $1 \times 0.5H \times 0.5W$ box. By Lemma \ref{lem:wire}, it is possible to connect each of the $\langle X_i\rangle$ blocks in order while ensuring the remaining space is continuous $2 \times 2$ polygrid, if we, as in Section \ref{sec:2D_wildcard}, allow the wires to be two cells thick. Note these constants are $\frac{3}{2}$ times those in Section \ref{sec:2D_wildcard} because the wires must detour around the shelf, as mentioned briefly in Lemma \ref{lem:wire}. (To detour around the shelf, each wire will have an additional $4nw_X$ length in the notation of Lemma \ref{lem:wire} because each wire has to detour around at most $n$ shelf frame, each for $2$ times. Each time, the wire has to detour with length at most $2w_X$. This sums up to $2n(2w_X) = 4nw_X$ as desired.) 

Moreover, when all the wires detour around the same shelf, wires need to stack into more layers, requiring wider gaps. Note that there are $3n$ wires in total, and each wire contributes to at most $4$ layers around each shelf (each wire is split into $U_i$ and $V_i$, both need to travel up and drop down). We need to leave a square between each layer of wires, the stack of wires can fit in the gap of width $2(2 \cdot (3n \cdot 4)) = 48n < g^*$. (Note that a square means $2\times 2$ polygrid in the equivalent 2D grid.) This justifies that the extra gap of width $3g$ on each shelf and the gap of width $g$ on the right of the box is sufficient. 

\begin{figure}[htp]
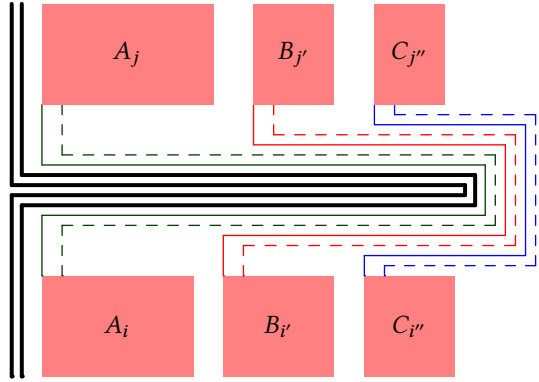

\centering
\asyinclude{asy/shelf_detour.asy}
\caption{Wires detour around the shelf}
\label{fig:detourshelf}
\end{figure}

All that is left is to fill the remaining space. Since the remaining space is $2 \times 2$ polygrid (or a $2 \times 4 \times 4$ polygrid when viewed in three dimensions), we can fill it by Lemma \ref{lem:hamiltonian_2x2}.

Therefore, there exists a matching \textsc{Numerical 3DM} if and only if there exists a solution to snake cube $2\times H\times W$ \textsc{Snake Cube}. Since our reduction takes polynomial time, $2 \times H \times W$ \textsc{Snake Cube} is NP-hard.

\section{Snake Cube Puzzles with Hexagonal Prisms}
\label{sec:triangular}
In this section, we consider a version of a 2D Snake cube with a chain of hexagonal prisms. From now on, we represent each prism by points. Two prisms are adjacent if and only if the points corresponding to them are adjacent in the triangular lattice. Thus, the problem becomes a triangular grid variant of the flattening fixed-angle chains problem in \cite{original2Dsnake}.

An infinite \vocab{triangular grid} is a two-dimensional lattice generated by vectors $v_1=\left(\begin{smallmatrix} 1\\ 0\end{smallmatrix}\right)$ and $ v_2=\left(\begin{smallmatrix} \cos 60^\circ\\ \sin 60^\circ\end{smallmatrix}\right)$; each point represents a hexagonal prism. Two points in a triangular grid are \vocab{adjacent} if they are distance $1$ apart. A \vocab{$\boldsymbol{60^\circ}$ parallelogram box} of dimension $H\times W$ is the set of $HW$ points obtained by translating the set $\{iv_1 + jv_2 : i\in\{1,\dots,W\}, j\in\{1,\dots,H\}\}$ by some lattice vector.

For this section, a \vocab{program} is a string that consists of only characters $\texttt S$, $\texttt T_{60}$, and $\texttt T_{120}$, where $\texttt S$ denotes straights, $\texttt T_{60}$ denotes $60^\circ$ turns (forming $120^\circ$ angle), and $\texttt T_{120}$ denotes $120^\circ$ turns (forming $60^\circ$ angle). We say that a chain $C = (p_1,p_2,\dots,p_{\lvert s\rvert})$ (length $\lvert s\rvert$) satisfies $s$ if and only if for every $i\in\{2,3,\dots,\lvert s\rvert-1\}$, the angle between $p_{i-1}$, $p_i$, $p_{i+1}$ is $180^\circ$ if $s_i = \texttt S$, $60^\circ$ if $s_i=\texttt T_{120}$, and $120^\circ$ if $s_i=\texttt T_{60}$. $C$ is \vocab{closed} if and only if $p_1=p_{|s|}$.

We will consider two versions of this problem.
\begin{itemize}
    \item \textsc{Triangular Closed Chain}: given an program $\mathcal P$, decide there is a closed chain that satisfies the program $\mathcal P$.
    \item \textsc{Bounded Triangular Path Packing}:  given a $60^\circ$ parallelogram box $B$, an program $\mathcal P$, and two adjacent vertices $u$ and $v$ on a boundary of $B$, decide whether there is a chain from $u$ to $v$ that satisfies program $s$.
\end{itemize}
In particular, the \textsc{Triangular Closed Chain} version is the triangular analog of the flattening problem in \cite{original2Dsnake}. We will prove the following.
\begin{theorem}
Both \textsc{Triangular Closed Chain} and \textsc{Bounded Triangular Path Packing} are NP-complete.
\end{theorem}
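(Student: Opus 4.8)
The plan is to adapt the block-and-wire framework from Sections~\ref{sec:overview_3DM}--\ref{sec:2D_wildcard} to the triangular lattice, reducing from \textsc{Numerical 3DM} (in the form guaranteed by Proposition~\ref{prop:3DM_tweak}). The overall architecture is the same: choose parameters $g=\Theta(n)$, $h=\Theta(n^2)$, $m=\Theta(n^3)$, build block gadgets $\langle A_i\rangle,\langle B_i\rangle,\langle C_i\rangle$ that are forced to fold into roughly-$h\times mx_i$ parallelogram-shaped blocks, connect consecutive blocks with wiring gadgets made of long repeated instruction sequences, and argue both directions of the reduction. For \textsc{Triangular Closed Chain} there is no bounding box, so the ``chain $\Rightarrow$ matching'' direction cannot invoke a box constraint directly; instead I would have the program itself trace out a large enclosing parallelogram frame (as in the shelf construction of Section~\ref{subsec:shelf_cons}, using long runs of $\texttt S$'s that are forced to be straight lines), so that the interior region plays the role of the box; closing the chain is then arranged by routing the final wire back to the start. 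For \textsc{Bounded Triangular Path Packing} the box and the endpoints $u,v$ are given, so this framing step is unnecessary and the reduction is closer to \textsc{2D Snake Cube Packing} (Theorem~\ref{thm:2D_pack}).

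The key steps, in order, would be: (1)~specify the triangular block gadget --- in the triangular lattice a ``straight'' run of $mx_i$ points followed by two $60^\circ$ turns reverses direction within a strip of constant height, so $\langle X_i\rangle = (\texttt S)^{mx_i-1}\big(\texttt T_{120}\texttt T_{120}(\texttt S)^{mx_i-2}\big)^{h-1}\texttt S$ or the analogous pattern folds into an $h\times mx_i$ parallelogram block, exactly mirroring Section~\ref{subsec:wildcard_setup}; (2)~state and apply a triangular analogue of the Segment Packing Lemma (Lemma~\ref{lem:segment_packing}) --- its proof is combinatorial about horizontal segments and interval packing and goes through verbatim once we observe that in a $60^\circ$ parallelogram every maximal straight run of forced-straight cubes lies along one of the three lattice directions, and the one parallel to the base behaves like a horizontal segment of the right length; (3)~invoke the Wire Lemma (Lemma~\ref{lem:wire}) --- it is stated purely about square grids, so I would redo the routing argument on the triangular grid, where each original ``square'' is now emulated by a small constant-size cluster of triangular-lattice points (analogous to the $2\times 2$ cube trick), and check that a long run of a single turn type (or an alternation of $\texttt T_{60}$ and $\texttt T_{120}$) can be steered in any of the six directions and can pad its length in constant increments; (4)~fill the leftover space (only needed for \textsc{Triangular Closed Chain}, if we insist on filling, or we simply allow packing) via a Hamiltonicity-of-refined-grid argument in the spirit of Lemma~\ref{lem:hamiltonian_2x2} adapted to the triangular subgrid; (5)~assemble the program, verify the length/closure bookkeeping, and conclude both directions. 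Membership in NP is immediate as noted in Section~\ref{sec:preliminaries}.

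\textbf{Main obstacle.} The hardest part will be step~(3), the wiring: the triangular lattice has six directions rather than four, wires can be ``two clusters thick'' in ways that interact with the parallelogram's slanted sides, and I must re-verify the parity/length-adjustment argument --- analogous to the ``modulo $4$'' issue resolved at the end of Section~\ref{subsec:2D_pack} --- now modulo whatever period a monochromatic-turn walk has on the triangular lattice, so that the wire lengths $\ell_A,\ell_B,\ell_C$ can actually be realized exactly. A secondary but real difficulty, specific to \textsc{Triangular Closed Chain}, is making the construction a single \emph{closed} chain: I need to route the last wire back to the start point without crossing, and confirm that the parallelogram-framing trick genuinely forces all blocks into the intended interior region (the analogue of the uniqueness argument for the shelf frame). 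Once these are in place, the ``matching $\Rightarrow$ chain'' direction is a direct construction and ``chain $\Rightarrow$ matching'' follows from the triangular Segment Packing Lemma exactly as in Section~\ref{subsec:2D_fill_to_3DM}.
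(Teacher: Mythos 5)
Your overall architecture for \textsc{Bounded Triangular Path Packing} matches the paper's: block gadgets folding into $h\times mx_i$ parallelograms, the Segment Packing Lemma (Lemma~\ref{lem:segment_packing}) for the ``chain $\Rightarrow$ matching'' direction, and the Wire Lemma (Lemma~\ref{lem:wire}) applied to an auxiliary coarser grid embedded in the triangular lattice (the paper embeds a hexagonal grid so that a walk along hexagon edges uses only $\texttt T_{60}$ instructions, then takes a $3\times 3$ refinement of a parallelogram grid). One local error: your block gadget uses $\texttt T_{120}\texttt T_{120}$ to turn around at the end of a row, but two $120^\circ$ turns change direction by $240^\circ$ or $0^\circ$, not $180^\circ$; reversing direction into the adjacent row of the $60^\circ$ parallelogram requires the two turn angles to sum to $180^\circ$, i.e.\ one $\texttt T_{60}$ and one $\texttt T_{120}$, in alternating order $\texttt T_{120}\texttt T_{60}$ / $\texttt T_{60}\texttt T_{120}$ at the two ends (as in Section~\ref{subsec:triangular_block}). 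You hedge with ``or the analogous pattern,'' so this is repairable.

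The genuine gap is in your treatment of \textsc{Triangular Closed Chain}. You propose to trace an enclosing parallelogram frame out of ``long runs of $\texttt S$'s that are forced to be straight lines,'' by analogy with the shelf of Section~\ref{subsec:shelf_cons}. But the shelf's uniqueness argument leans on the box being part of the input: a run of $\Theta(W)$ straights must align with the box's long axis. In the closed-chain problem there is no box, so while each long $\texttt S$-run is forced to be a straight segment, its \emph{orientation} is not forced --- the triangular lattice offers three axes, and every turn instruction between consecutive runs can be taken to either side. Nothing in your plan prevents the ``frame'' from folding into some non-enclosing shape, in which case the interior-confinement (and hence the entire ``chain $\Rightarrow$ matching'' direction) collapses. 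The paper's resolution (Section~\ref{subsec:frame}) is the idea you are missing: blow up the program by a factor of $101$ so that every segment of the payload has length $\equiv 0\pmod{101}$, and give the frame segments lengths $\equiv 1,2,3,7,15,22\pmod{101}$; since a closed chain's displacement vector is zero, the signed, orientation-grouped sum of these residues must vanish mod $101$ in each lattice coordinate, and an exhaustive check shows the only admissible grouping is $\{-1,-2,3\},\{-7,-15,22\},\emptyset$, which pins down all orientations and forces the frame of Figure~\ref{fig:triangular_frame} (the frame then also supplies the closure from $101v$ back to $101u$). Without this, or some equally rigid replacement, your closed-chain reduction is not sound. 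Your worry about the length-adjustment parity for wires is legitimate but is already absorbed by the hexagonal-grid embedding, where each coarse step costs a fixed even number of lattice moves and coiling adjusts in those increments.
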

Clearly, these two problems are in NP with certificates being the sequence of points in the chain. The rest of this section is devoted to proving that these two problems are NP-hard. In particular, Section \ref{subsec:triangular_block}, \ref{subsec:triangular_packing}, and \ref{subsec:triangular_wiring} will show that \textsc{Bounded Triangular Path Packing} is NP-hard by using a similar reduction from Numerical 3DM following the same outline in Section \ref{sec:overview_3DM}, except that everything will be sheared to align with the parallelogram. Then, in Sections \ref{subsec:frame}, we will show how to convert \textsc{Bounded Triangular Path Packing} to \textsc{Triangular Closed Chain} by introducing a new gadget called a \vocab{frame}.
\begin{remark}
    With the same proof, we can show that the problem of ``given a $60^\circ$ parallelogram box $B$ and a program $\mathcal P$, decide whether there is a chain of points in $B$ satisfying instruction $\mathcal P$'' is NP-Complete. We omit the proof for this version.
\end{remark}
\subsection{Setup and the Block Gadget}
\label{subsec:triangular_block}
The reduction will be very similar to the one outlined in Section \ref{sec:overview_3DM}. However, we will shear everything to align with the parallelogram. In particular, when we talk about the \vocab{width} and the \vocab{height} of the grid, we mean the length parallel to the sides of $B$.

Same as before, we fix an instance of $3$-partition $a_1,\dots,a_n$, $b_1,\dots,b_n$, $c_1,\dots,c_n$ with target sum $t$. Assume the condition of Proposition \ref{prop:3DM_tweak} that $a_i\in (0.5t, 0.6t)$, $b_i\in (0.25t, 0.3t)$, and $c_i\in (0.125t, 0.15t)$ for all $i$.

We set the following parameters.
$$\setlength{\arraycolsep}{2pt}
\begin{array}{rclrcl}
g =& \text{gap width} &= 300n,
\hspace{2cm} &
 H =& \text{height of the box} &= nh + (n+1)g, \\[2pt]
h =& \text{height of blocks} &= 60000n^3,
& W =& \text{width of the grid} &= mt + 4g, \\[2pt]
m =& \text{multiplier of widths} &= 90000n^3,
& \ell_A =& \text{length of the  wires} &= 32nW/3,\\[2pt]
 \ell_B =& \text{length of the  wires} &= 16nW/3,
& \ell_C =& \text{length of the  wires} &= 8nW/3.
\end{array}$$

We take an $H\times W$ parallelogram box depicted in Figure \ref{fig:triangular_setup}. First, we have the following strings of block gadgets for all $i=1,2,\dots,n$
\begin{align*}
    \langle A_i\rangle &= 
    (\texttt S)^{ma_i-1}
    \texttt{T}_{120}\texttt{T}_{60}\,
    (\texttt S)^{ma_i-2}  \left(\texttt{T}_{60}\texttt{T}_{120}\,
    (\texttt S)^{ma_i-2}\,
    \texttt{T}_{120}\texttt{T}_{60}\,
    (\texttt S)^{ma_i-2}
    \right)^{h/2-1}\texttt S, \\
    \langle B_i\rangle &= 
    (\texttt S)^{mb_i-1}
    \texttt{T}_{120}\texttt{T}_{60}\,
    (\texttt S)^{mb_i-2}  \left(\texttt{T}_{60}\texttt{T}_{120}\,
    (\texttt S)^{mb_i-2}\,
    \texttt{T}_{120}\texttt{T}_{60}\,
    (\texttt S)^{mb_i-2}
    \right)^{h/2-1}\texttt S, \\
    \langle C_i\rangle &= 
    (\texttt S)^{mc_i-1}
    \texttt{T}_{120}\texttt{T}_{60}\,
    (\texttt S)^{mc_i-2}  \left(\texttt{T}_{60}\texttt{T}_{120}\,
    (\texttt S)^{mc_i-2}\,
    \texttt{T}_{120}\texttt{T}_{60}\,
    (\texttt S)^{mc_i-2}
    \right)^{h/2-1}\texttt S,
\end{align*}
which folds into a parallelogram block as depicted in Figure \ref{fig:triangular_setup}.

The program we will use is
\begin{align*}
\mathcal P ={} &\langle A_1\rangle\,(\texttt T_{60})^{\ell_A}\,\langle A_2\rangle\,(\texttt T_{60})^{\ell_A}\dots
(\texttt T_{60})^{\ell_A}\,\langle A_n\rangle\,(\texttt T_{60})^{\ell_A} \\
&\langle B_1\rangle\,(\texttt T_{60})^{\ell_B}\,\langle B_2\rangle\,(\texttt T_{60})^{\ell_B}\dots
(\texttt T_{60})^{\ell_B}\,\langle B_n\rangle\,(\texttt T_{60})^{\ell_B} \\
&\langle C_1\rangle\,(\texttt T_{60})^{\ell_C}\,\langle C_2\rangle\,(\texttt T_{60})^{\ell_C}\dots
(\texttt T_{60})^{\ell_C}\,\langle C_n\rangle\,(\texttt T_{60})^{\ell_C}.\\
\end{align*}%
\subsection{Chain implies Matching}
\label{subsec:triangular_packing}
Suppose that there is a chain that satisfies program $\mathcal P$ above. 
Observe that the block gadget must form $h$ straight segments of length $ma_i$, $mb_i$, or $mc_i$. All of these straight segments must be horizontal with respect to the longer sides; otherwise, placing any of the segments in $60^{\circ}$ or $120^{\circ}$ to the longer sides will make the segments exceed the boundary because
    \[H = nh + (n+1)g = n(60000n^2) + (n+1)(200n) < 90000n^3 = m.\]
    It can be checked that 
\begin{itemize}
    \item $H = nh + (n+1)g = nh + (n+1)(600n) < nh + 1500n^2 = nh + \frac h{40}$
    \item $W = mt + 4g = mt + 2400n < mt + 90000n^3 < m(t+1)$.
\end{itemize}
Moreover, straight paths in the same block must be in $h$ consecutive horizontal rows because the end of one segment must be next to the beginning of the other.
Therefore, applying Lemma \ref{lem:segment_packing}, there is a solution to Numerical 3DM for $(a_i)_{i=1}^n$, $(b_i)_{i=1}^n$, and $(c_i)_{i=1}^n$.

\begin{figure}[htp]
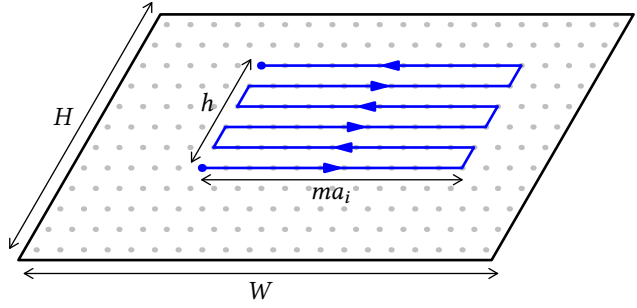

    \centering
    \asyinclude{asy/triangular_setup.asy}
    \caption{An example of the bounding box and a block gadget. The gray dots represent prisms in the triangular grid.}
    \label{fig:triangular_setup}
 \end{figure}
\subsection{Matching implies Chain}
\label{subsec:triangular_wiring}
\textsc{Numerical 3DM}, i.e., permutations $\sigma$ and $\pi$ such that $a_i + b_{\sigma(i)} + c_{\pi(i)} = t$ for all $i$, then we can construct a chain that satisfies the program.

First, we note that a chain occupying a $h\times ma_i$ parallelogram block can be made to satisfy the block gadget $\langle A_i\rangle$, shown in Figure \ref{fig:triangular_setup}. Represent $\langle A_i\rangle$ with that chain, and similarly, represent $\langle B_i\rangle$ and $\langle C_i\rangle$ with $h\times mb_i$ parallelogram block and $h\times mc_i$ parallelogram block.
We then place those blocks similar to Figure \ref{subfig:reduction_outline}, but sheared by $60^\circ$: divide the box into $n$ parts horizontally and putting block $\langle A_i\rangle$, $\langle B_{\sigma(i)}\rangle$, and $\langle C_{\pi(i)}\rangle$ on the $i$-th part. We also leave gaps exactly $g$ between any two adjacent blocks. Notice that since $a_i + b_{\sigma(i)} + c_{\pi(i)} = 3t$ for all $i$, the block placement allows leaving every gap exactly $g$. 

Now, we need to connect between blocks using a sequence of $\texttt T_{60}$'s. First, in the triangular grid, we embed the hexagonal grid as shown in Figure \ref{subfig:hex_grid}.
\begin{figure}[htp]
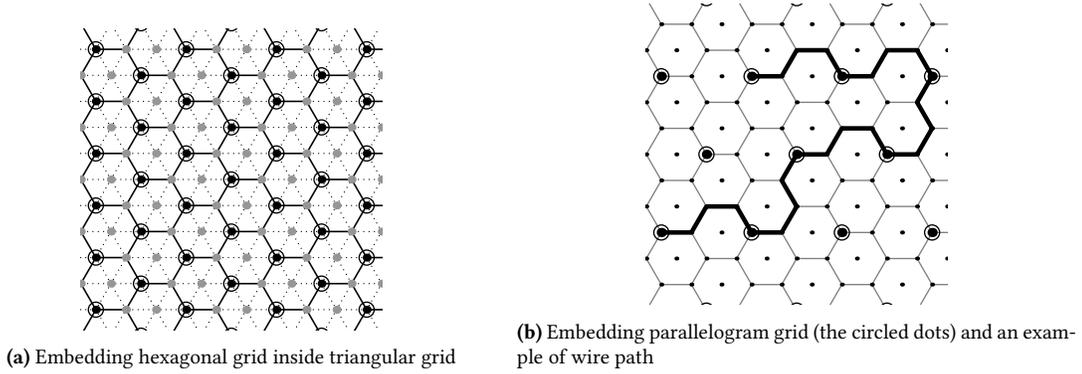

    \centering
    \begin{subfigure}{0.45\textwidth}
        \centering
        \asyinclude{asy/triangular_grid_1.asy}
        \caption{Embedding hexagonal grid inside triangular grid}
        \label{subfig:hex_grid}
    \end{subfigure}
    \begin{subfigure}{0.45\textwidth}
        \centering
        \asyinclude{asy/triangular_grid_2.asy}
        \caption{Embedding parallelogram grid (the circled dots) and an example of wire path}
        \label{subfig:wire_path}
    \end{subfigure}
    \caption{Wiring the blocks together}
\end{figure}

In the hexagonal grid (Figure \ref{subfig:hex_grid}), it is easy to see that one can move between two adjacent (distance $\sqrt 3$ unit apart) black vertices by traveling through solid edges only. Since all angles between solid edges are $120^\circ$, traveling through solid edges uses only a sequence of $\texttt T_{60}$'s. 

Next, we embed a parallelogram grid inside the grid of black vertices as shown in Figure \ref{subfig:wire_path}. This is a $3\times 3$-refinement of the original parallelogram grid. Thus, we may now apply the wire lemma (Lemma \ref{lem:wire}) on the entire figure scaled down by $3$. In particular, to connect the $\langle A_i\rangle$ blocks, we use parameters $g' = g/3 = 100n$, $w_A = W/3$, and travel between two adjacent squares correspond to four movements in the triangular grids, giving $\ell_A = 4(8nw_A) = 32nW/3$ tiles. Blocks $\langle B_i\rangle$'s and $\langle C_i\rangle$'s can be connected similarly with wire lengths divided by two and by four respectively. Finally, we transform the sequence from the embedded grid to the original triangular grid accordingly; an example of a wire path is shown in Figure \ref{subfig:wire_path}. This implies that there is the desired path.
%
%
%
%
\subsection{Reduction from Closed Chain to Bounded Path}
\label{subsec:frame}
To turn \textsc{Bounded Triangular Path Packing} to \textsc{Triangular Closed Chain}, we need a new structure called a \vocab{frame}, which is a structure that confines the rest of the construction into a bounded parallelogram, making the reduction similar to previous subsections work. This construction is similar to the frame gadget in \cite{original2Dsnake}, but we need to do a parity in modulo a much larger prime since triangular grids have more possibilities for angles of each edge.

\bigskip

The reduction goes as follows. Suppose that $\mathcal P$ is the instance of \textsc{Bounded Triangular Packing}. Then, we first blow up $\mathcal P$ by $101$ times by replacing
$$\texttt S \text{ with } \underbrace{\texttt{SSS}\dots\texttt S}_{101},
\qquad \texttt{T}_{60} \text{ with } \underbrace{\texttt{SSS}\dots\texttt S}_{100}\texttt T_{60}, 
\quad\text{ and }\quad \texttt{T}_{120} \text{ with }  \underbrace{\texttt{SSS}\dots\texttt S}_{100}\texttt T_{120}.$$
This gives a new program $101\mathcal P$ that will constrain a chain in the bounding box $101B$ and will connect two vertices $101u$ and $101v$ on the boundary of $101B$ that has distance $101$ apart. We now append this with the frame program $F$, defined to be folded as in Figure \ref{fig:triangular_frame}.
\begin{figure}[htp]
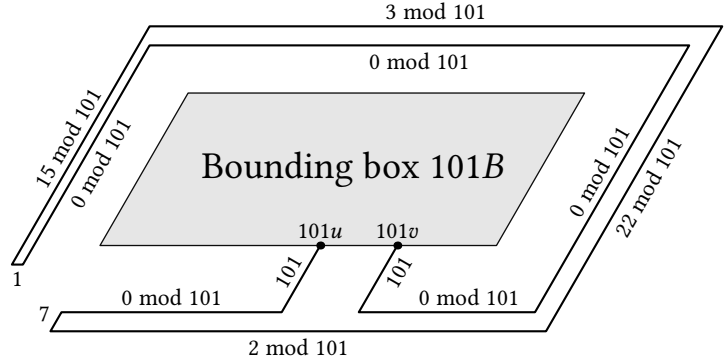

    \centering
    \asyinclude{asy/triangular_frame.asy}
    \caption{The Frame Gadget}
    \label{fig:triangular_frame}
\end{figure}

\bigskip

We now claim that the concatenation of $101\mathcal P$ and $F$ is a valid reduction to \textsc{Triangular Closed Chain}. To prove this, we first show that the frame works as expected.
\begin{claim}
    \label{claim:frame_is_forced}
    Any chain satisfying program $F$ must look like \ref{fig:triangular_frame} up to global rotation and reflection.
\end{claim}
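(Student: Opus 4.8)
The plan is to show that program $F$, together with the $101$-fold blowup, forces a unique (up to global symmetry) folding that looks like Figure~\ref{fig:triangular_frame}. The key leverage is a parity/arithmetic invariant on edge directions modulo $101$. Specifically, I would assign to each unit edge of the chain one of the six directions of the triangular grid, encode these as residues (say elements of $\mathbb Z/6$ or, more usefully here, track the net displacement in the lattice coordinates $\mathbb Z^2$ generated by $v_1,v_2$), and observe that a long run of $100$ straights followed by a single turn moves the chain a long way before it can change direction. Because $101$ is prime and every ``real'' turn of the original program is padded by exactly $100$ straights, the positions where the chain is allowed to turn are rigidly spaced; any attempt to fold $F$ differently would require a turn at a location that is not $\equiv 0 \pmod{101}$ along some coordinate, contradicting the structure of $F$.

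Concretely, the steps I would carry out are: (1) Describe $F$ explicitly as a sequence of maximal straight segments whose lengths are the side lengths of the intended frame rectangle (these are the long sides of length $\Theta(101 W)$ and $\Theta(101 H)$, plus whatever ``teeth'' or notches are needed to pin down orientation and to leave the gap of width $101$ for attaching $101u,101v$). (2) Show that whenever the program prescribes $\Theta(W)$ or $\Theta(H)$ consecutive straights, the corresponding cubes must lie on a single lattice line in one of the three axis directions — this is exactly the observation already used for the shelf frame in Section~\ref{subsec:shelf_cons}, and for the outer frame it forces the overall shape to be a parallelogram of the intended dimensions, up to global rotation/reflection. (3) Use the modular invariant to rule out ``shifted'' or ``skewed'' alternative foldings: because each turn of the unblown-up chain is separated from the next by a segment whose length is $\equiv 1 \pmod{101}$ (one more than the padding), the cumulative coordinate at each turn is pinned modulo $101$, so the frame cannot close up in any other way. (4) Conclude that the chain satisfying $F$ is forced into Figure~\ref{fig:triangular_frame}, and in particular it bounds an empty parallelogram region isometric to $101B$ with the attachment points $101u,101v$ on its boundary, so that the remaining program $101\mathcal P$ must fold inside exactly this region — reducing to the already-proved \textsc{Bounded Triangular Path Packing} hardness.

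The main obstacle I anticipate is step (3): making the modular rigidity argument airtight. It is easy to see that the intended folding works; the subtle part is proving no other folding does. The triangular grid has six edge directions, and at a $60^\circ$ or $120^\circ$ turn the chain has two choices of which way to bend, so a priori the frame could wander. The role of the prime $101$ (as opposed to a smaller modulus) is to kill these alternatives: I would track, for the frame's long straight runs, the accumulated lattice vector and argue that closing the frame into a loop imposes a system of congruences mod $101$ that has only the intended solution, using that all the relevant small combinatorial parameters (number of teeth, $n$-independent constants in $F$, the blowup offsets) are strictly less than $101$ in absolute value, hence determined exactly rather than merely mod $101$. A secondary but routine obstacle is bookkeeping the exact instruction string for $F$ so that the teeth both fix the global orientation (breaking the rotation/reflection symmetry down to the stated one) and leave precisely the right boundary edge free for the $101u$–$101v$ connection; this is analogous to the outer-frame construction in Section~\ref{subsec:shelf_cons} and in~\cite{original2Dsnake}, so I would follow that template closely rather than reinventing it.
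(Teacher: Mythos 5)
Your high-level framework is the right one: the paper's proof is indeed a modular rigidity argument in which every maximal straight segment of $F$ has length divisible by $101$ except for a handful of short exceptions, and a congruence condition modulo $101$, combined with the fact that the exceptional residues are small, forces the intended folding. But the step you yourself flag as the main obstacle, your step (3), is exactly where the entire proof lives, and your sketch of it does not describe the mechanism that actually works. The paper does not track ``the cumulative coordinate at each turn'' being pinned modulo $101$, and $F$ has no ``teeth.'' Instead, it uses that the net lattice displacement contributed by $F$ must vanish modulo $101$ (all edges of $101\mathcal P$ come in collinear runs of length divisible by $101$, and the loop must close), decomposes that displacement over the three axis directions of the triangular grid, and observes that only six segments of $F$ --- with lengths $\equiv 1,2,3,7,15,22 \pmod{101}$ --- contribute nonzero residues. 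The combinatorial core is then an exhaustive check that the only way to distribute these six residues among the three directions with $\pm$ signs so that the three directional sums agree modulo $101$ is $\{-1,-2,3\}$, $\{-7,-15,22\}$, $\emptyset$; this uniqueness forces all exceptional segments into exactly two of the three orientations with determined relative signs, which pins down the outer frame. Your proposal never identifies this system of congruences or exhibits its unique solution, so the ``skewed'' alternative foldings you worry about are not actually ruled out.

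A second, independent omission: even once the outer frame is forced, the paper still needs a separate non-modular stage for the inner frame. The long segments of length $\equiv 0 \pmod{101}$ --- which constitute most of $F$ and about which the modular invariant says nothing --- are forced into their orientations one by one by collision with the already-placed outer frame, propagating until the path reaching $101u$ and $101v$ is determined. Your sketch has no analogue of this stage. (Your step (4), confining $101\mathcal P$ to the region $101B$, belongs to the subsequent claim rather than to Claim~\ref{claim:frame_is_forced} itself.)
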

\begin{proof}
    We first note that the vertices in each line segment must be connected by the ``straight'' instruction, so it must be a long line segment. There are six segments whose length is not $0\pmod{101}$, lengths $1$, $2$, $3$, $7$, $15$, $22$, and we need to group them into three groups (corresponding to three possible orientations of the segment) and assign a $+$ or $-$ sign for each segment so that the sum turns out to be equal.
    An exhaustive check reveals that the only way to do so in modulo $101$ is $\{-1,-2,3\}$, $\{-7,-15,22\}$, and $\emptyset$.
    This forces all segments to be at one of the two angles; without loss of generality, let them be as shown in Figure \ref{fig:triangular_frame}. Moreover, the $22\pmod{101}$ segment must have opposite orientation from the $7$ and $15\pmod{101}$ segment, and the $3\pmod{101}$ segment must have opposite orientation from the $1$ and $2\pmod{101}$ segment. This suffices to force the construction of the outer frame.

    For the inner frame, note that the $0\pmod{101}$ segment right after the segment of length $1$ must be forced to be that angle (otherwise, it collides with the outer frame), and this subsequently forces all other segments up to the point $101v$. The construction up to point $101u$ is then forced because there is only one possible angle for the path to leave $101u$ avoiding a collision. Hence, the entire frame is forced.
\end{proof}
Finally, it suffices to prove the following.
\begin{claim}
    Let $\mathcal P'$ be the concatenation of the blown-up string $101\mathcal P$ and the frame construction $F$. Then, there is a closed chain satisfying $\mathcal P'$ if and only if there is a chain inside $B$ that goes from $u$ to $v$ satisfying $\mathcal P$.
\end{claim}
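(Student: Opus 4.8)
The plan is to establish both directions of the biconditional, leaning on Claim~\ref{claim:frame_is_forced} to pin down the frame and on the earlier reduction (Sections~\ref{subsec:triangular_block}--\ref{subsec:triangular_wiring}) applied to the blown-up instance $101\mathcal P$ inside the box $101B$.

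\textbf{($\Leftarrow$) From a bounded path to a closed chain.} Suppose there is a chain inside $B$ from $u$ to $v$ satisfying $\mathcal P$. Blowing up by $101$, the corresponding chain satisfies $101\mathcal P$, lives in $101B$, and connects $101u$ to $101v$; note the blow-up preserves the relevant geometry since replacing each turn by $100$ straights followed by that turn just rescales the polyline by $101$. Now I would place the frame construction $F$ around $101B$ exactly as in Figure~\ref{fig:triangular_frame}: the inner frame of $F$ is designed to bound a region congruent to $101B$, and its two endpoints coincide with $101u$ and $101v$ (the $1\pmod{101}$ segment and the forced inner-frame segments are calibrated so that the gap left open for the path is precisely an edge $101u\,101v$ on the boundary). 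Concatenating $101\mathcal P$ with $F$ and observing that the last cube of the path meets the first cube of the frame and the last cube of the frame meets the first cube of the path (a closed loop), we get a closed chain satisfying $\mathcal P' = 101\mathcal P \cdot F$. One must check the junction angles at the two splice points are consistent with the instructions at those positions; this is a routine local verification built into the definition of $F$.

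\textbf{($\Rightarrow$) From a closed chain to a bounded path.} Conversely, suppose a closed chain satisfies $\mathcal P'$. By Claim~\ref{claim:frame_is_forced}, the sub-chain following $F$ is forced (up to global rotation/reflection) to look like Figure~\ref{fig:triangular_frame}. In particular the frame encloses a parallelogram region congruent to $101B$, the rest of the chain (the part following $101\mathcal P$) is trapped inside this region, and its two endpoints are forced to be $101u$ and $101v$ on the boundary. Hence the $101\mathcal P$-portion is a chain inside $101B$ from $101u$ to $101v$ satisfying $101\mathcal P$. Contracting by $101$: since every maximal run of straights in $101\mathcal P$ has length $\equiv 0$ or some fixed residue mod $101$ by construction, and the turns occur exactly at positions $\equiv 0 \pmod{101}$, a chain satisfying $101\mathcal P$ must have all its ``corner'' cubes on the sublattice of index $101$, so contracting yields a well-defined chain satisfying $\mathcal P$ inside $B$ from $u$ to $v$. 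This last divisibility/alignment argument is what the blow-up was introduced for.

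\textbf{Main obstacle.} I expect the technical heart to be the containment argument in the ($\Rightarrow$) direction: showing that once the frame is forced, the remaining chain genuinely cannot escape the enclosed parallelogram and that its endpoints are pinned to $101u,101v$. This requires checking that the single opening left in the forced frame is exactly one lattice edge wide and that $\mathcal P'$ gives the path no instruction allowing it to slip out — essentially the same role the frame plays in \cite{original2Dsnake}, but here the modulus $101$ (a prime chosen larger than in the square-grid case, as noted before Claim~\ref{claim:frame_is_forced}) is what makes the exhaustive residue check in Claim~\ref{claim:frame_is_forced} force a unique frame shape despite the richer set of triangular-grid edge directions. Given Claim~\ref{claim:frame_is_forced}, the rest is bookkeeping: the forward direction is a containment-plus-contraction argument and the backward direction is an explicit assembly, and both run in polynomial time, so NP-hardness of \textsc{Triangular Closed Chain} follows, and since it is clearly in NP, NP-completeness follows.
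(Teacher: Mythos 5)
Your proposal is correct and follows essentially the same route as the paper: the backward direction is the explicit assembly of the blown-up path with the frame, and the forward direction combines Claim~\ref{claim:frame_is_forced} with the observation that the corners of any chain satisfying $101\mathcal P$ lie on the index-$101$ sublattice, forcing containment in $101B$ and allowing contraction back to $B$. The paper's own proof is just a terser statement of the same containment-plus-contraction argument.
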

\begin{proof}
    For the direction $(\Rightarrow)$, if there is a closed chain satisfying $\mathcal P'$, note that by Claim \ref{claim:frame_is_forced}, $F$ must be folded into the structure shown in the figure above. Since $101\mathcal P$ can travel between points with coordinates $0\pmod{101}$, $101\mathcal P$ must travel inside the bounding box $101B$. Hence, $\mathcal P$ must travel inside the bounding box $B$.
    
    For the direction $(\Leftarrow)$, if $\mathcal P$ travels inside bounding box $B$ from $u$ to $v$, then $101\mathcal P$ must travel inside $101B$ from $101u$ to $101v$. The frame construction as shown in Figure \ref{fig:triangular_frame} then connects $101v$ back to $101u$, forming a closed chain.
\end{proof}
This establishes the NP-hardness of \textsc{Triangular Closed Chain}.

\section{Weak NP-Hardness of Filling a Rectangular Grid}
\label{sec:weak2DFill}

In this section, we consider \textsc{2D Snake Cube}, where the chain must fill a $1 \times H \times W$ rectangle. However, we allow the instructions to be encoded using the shorthand notation, which keeps the inputs polynomial with respect to the input integers. Since this modification means the problem may no longer be in NP, this reduction only proves NP-hardness. For any set $S$, let $\sum S$ be the sum of its elements.

Let $A$ be the multiset of positive integers, a \textsc{2-Partition} instance. We select $H = 2|A| + 4$ and $W = 4 \sum A + 1$. The program comprises the \vocab{caps} at either end and $|A|$ \vocab{layers} in between, encoding each $a_i$ in $A$ sequentially. The \vocab{swivel points} join each gadget and allow the layers to flip horizontally. The orientation of each layer left or right corresponds to assigning each $a_i$ to either partition (Figure \ref{fig:weak2dfill}).
\begin{figure}[htp]
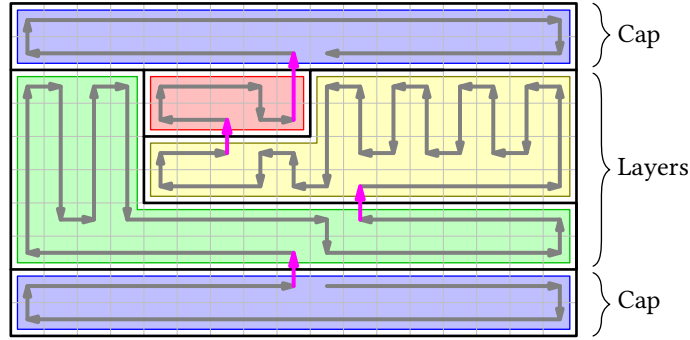

    \centering
    \asyinclude[width=0.55\linewidth]{asy/weak2D_overall.asy}
    \caption{Chain for $A = \{1,2,1\}$, emphasizing the different gadgets, highlighting the swivel points (in magenta), and demonstrating the 3 variants of layers.}
    \label{fig:weak2dfill}
\end{figure}

\subsection{Gadgets}
The starting cap is the subsequence (the ending cap being the reverse): 
\begin{align*}
    (\texttt S)^{\frac{W-1}2 - 1}\texttt{TT}(\texttt S)^{W-2}\texttt{TT}(\texttt S)^{\frac{W+1}2 - 1}\texttt{T} \ldots 
\end{align*}

Since $W > H$, the $W$-segments in the caps can only fit horizontally, forcing the orientation of the entire chain. To avoid creating unfillable empty space, the other segments in the cap must be on the same side of the $W$-segments because the $W$-segments obstruct any other part of the chain from filling it (Figure \ref{subfig:weak2dfill_incorrectcap}). This forces the $W$-segments to be at the top and bottom, and this in turn forces the swivel points joining the caps and the layers to be both horizontally centered and vertically one square from the boundaries, regardless of either horizontal reflection of the caps (Figure \ref{subfig:weak2dfill_correctcap}).

\begin{figure}[htp]
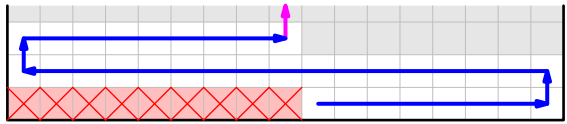
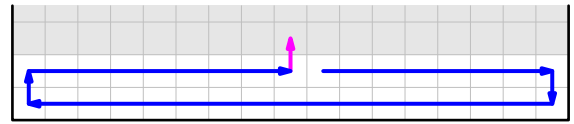

    \centering
    \begin{subfigure}{0.45\textwidth}
        \centering
        \asyinclude[width=\linewidth]{asy/weak2D_incorrectcap.asy}
        \caption{An arrangement that creates unfillable space (red, crossed) regardless of the rest of the chain (gray)}
        \label{subfig:weak2dfill_incorrectcap}
    \end{subfigure}
    \hfill
    \begin{subfigure}{0.45\textwidth}
        \centering
        \asyinclude[width=\linewidth]{asy/weak2D_correctcap.asy}
        \caption{Intended folding allowing the rest of the chain (gray) to fill the remaining space.}
        \label{subfig:weak2dfill_correctcap}
    \end{subfigure}
    \caption{Forced arrangement of cap segments. }
\end{figure}

For each $a_i$, let $A_i = \{a_1,a_2,\dots,a_i \}$, $w_i = 4\sum (A \setminus A_{i-1}) + 1$, $x_i = (w_i - 1)/2$, and $h_i = 2\lvert A \setminus A_{i-1}\rvert$. There are 3 variants of the corresponding layer gadget. For ease of discussion, sections of the gadget are named ( Figure \ref{subfig:weak2dfill_layer}).
\begin{itemize}
\item \textbf{Variant 1:} If $4a_i \leq x_i$ and $h_i > 2$, then the layer is the subsequence:
\begin{align*}
    \ldots\texttt{T}(\texttt S)^{x_i - 1}\texttt{T} && \text{``arm''} \\
    (\texttt S)^{h_i-2}(\texttt{T}\texttt{T}(\texttt S)^{h_i-3})^{4a_i - 1}\texttt{T} && \text{``padding''} \\
    (\texttt S)^{x_i - 4a_i + 1}(\texttt{T})^{2(2a_i-1)} && \text{``shift''} \\
    (\texttt S)^{x_i - 2a_i}\texttt{T}\texttt{T}(\texttt S)^{x_i - 2a_i - 1}\texttt{T}\ldots && \text{``return''}
\end{align*}
The ``arm'' can be though of as shifting the chain to a corner of the remaining space. Similarly, the ``padding'' fills up $4a_i$ units of horizontal space, the ``shift'' fills space to shift the swivel axis, and the ``return'' properly positions the next swivel point in the middle of the remaining space.

\item \textbf{Variant 2:} If $4a_i > x_i$ and $h_i > 2$, informally the padding can be visualized as spilling over into the shift, resulting in these differences:
\begin{align*}
    (\texttt S)^{{h_i}-2}(\texttt{T}\texttt{T}(\texttt S)^{{h_i}-3})^{x_i}(\texttt{T}\texttt{T}(\texttt S)^{{h_i}-2})^{4a_i - x_i - 1}\texttt{T} && \text{``padding''} \\
    (\texttt{T})^{2(2a_i-1 - (4a_i - x_i - 1)) } && \text{``shift''}
\end{align*}
\item \textbf{Variant 3:} If $h_i=2$, informally the padding can be visualized as subsuming the shift entirely, resulting in these changes from the first variant:
\begin{align*}
    \texttt{T}(\texttt S)^{x_i}(\texttt{T})^{2(2a_i-1)} && \text{``padding''} \\ 
    \texttt{T}\texttt S && \text{``return''} 
\end{align*}
\end{itemize}

\begin{figure}[htp]
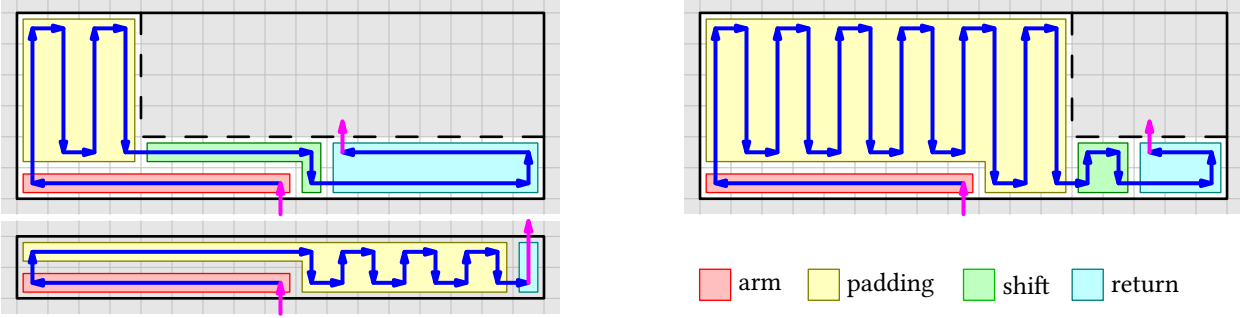

    \centering
    \begin{subfigure}{0.45\textwidth}
        \centering
        \asyinclude[width=\linewidth]{asy/weak2D_layer_variant1.asy}
    \end{subfigure}
    \hfill
    \begin{subfigure}{0.45\textwidth}
        \centering
        \asyinclude[width=\linewidth]{asy/weak2D_layer_variant2.asy}
    \end{subfigure}
    \hfill
    \begin{subfigure}{0.45\textwidth}
        \centering
        \asyinclude[width=\linewidth]{asy/weak2D_layer_variant3.asy}
    \end{subfigure}
    \hfill
    \begin{subfigure}{0.45\textwidth}
        \centering
        \asyinclude[width=\linewidth]{asy/weak2D_layer_key.asy}
    \end{subfigure}
    \caption{Layer gadget mechanics. Top left shows an example of Variant 1, top right shows Variant 2, and bottom left shows Variant 3. For all examples, $w_i = 17$. For the example for Variant 1, $a_i = 1, h_i = 6$; for Variant 2, $a_i = 3, h_i = 6$; for Variant 3, $a_i = 4, h_i = 2$.}
    \label{subfig:weak2dfill_layer}
\end{figure}

Each layer gadget has a $w_i \times h_i$ rectangular space available to it and leaves behind a $w_{i+1} \times h_{i+1}$ rectangular space, while displacing the next swivel point horizontally by $2a_i$ left or right and vertically up by $2$ relative to the previous swivel point, thus keeping the swivel point horizontally centered at the bottom of each space (Figure \ref{subfig:weak2dfill_layer}). The horizontal reflection of this gadget also must have these properties. To show this, we use induction starting from the first layer. Note that the arm and padding sections are all forced by space constraints. The shift section is forced since turning the chain outward at any point in the subsequence of repeated turns ($\texttt{T}$) would leave behind a $1 \times 1$ space (Figure \ref{fig:weak2dfill_layershift}), which can only be filled by the endpoints of the chain since once filled, the chain cannot continue onwards. However, the position of the endpoints is forced by the cap gadgets and cannot fill this space for any of the layer gadgets. Then, the return section is also forced by space constraints. 

\begin{figure}[htp]
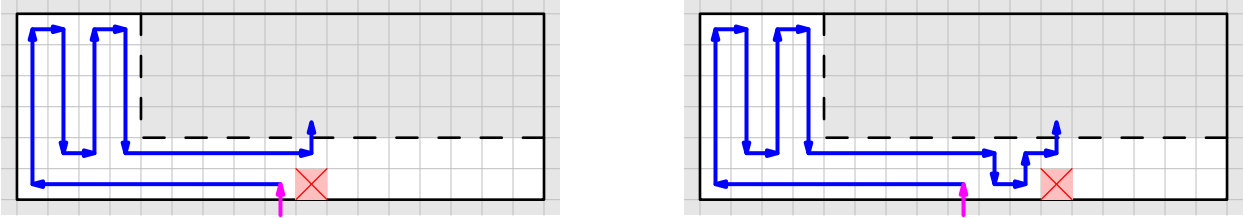

    \centering
    \begin{subfigure}{0.45\textwidth}
        \centering
        \asyinclude[width=\linewidth]{asy/weak2D_layershift1.asy}
    \end{subfigure}
    \hfill
    \begin{subfigure}{0.45\textwidth}
        \centering
        \asyinclude[width=\linewidth]{asy/weak2D_layershift2.asy}
    \end{subfigure}
    \caption{Arrangements of the shift section that leave behind unfillable space.}
    \label{fig:weak2dfill_layershift}
\end{figure}

\subsection{Reduction}

If there exists a solution to \textsc{2-Partition}, then construct all the gadgets and flip the layer gadgets so that arms for all numbers in $A_1$ point to the left, and those for numbers in $A_2$ point to the right. The horizontal displacements of the swivel points must sum to $0$, so the last layer can connect to the upper cap. 

If there exists a solution for \textsc{2D Snake Cube}, then we have demonstrated the gadgets are forced to be constructed in the correct orientation. Since the last layer gadget connects to the upper cap gadget, the horizontal displacements of the swivel points must sum to $0$. Reverse the above process to obtain a solution to the 2-Partition instance. 

\section*{Acknowledgements}
This work was conducted during open problem-solving in the MIT class on Algorithmic Lower Bounds: Fun with Hardness Proofs (6.5440) in Fall 2023. We thank the other participants in the class---particularly Papon Lapate, Benson Lin Zhan Li, and Kevin Zhao---for related discussions and for providing an inspiring atmosphere.

\bibliographystyle{alpha}
\bibliography{bibliography.bib}

\end{document}